%% file: paper_journal.tex
\documentclass[11pt, a4paper]{article}
\input{preamble}
\graphicspath{{../}}

\begin{document}
\clearpage
\pagestyle{plain}
\setcounter{page}{1}
\include{title}

\include{main}

\bibliographystyle{agsm}
\bibliography{bibliography}

\clearpage

\appendix
\input{appendix}

\clearpage
\input{supplementary_material}

\end{document}

%% file: preamble.tex
\usepackage{natbib}
\setcitestyle{authoryear,open={(},close={)}}
\usepackage[utf8]{inputenc}
\usepackage[T1]{fontenc}
\usepackage{lmodern,textcomp}
\usepackage{titlesec}
\usepackage{graphicx}
\usepackage{caption}
\usepackage{subcaption}
\usepackage{float}
\usepackage[space]{grffile}
\usepackage{xr-hyper}
\usepackage{hyperref}
\hypersetup{colorlinks=true,citecolor=blue}
\usepackage{rotating}
\usepackage{pdflscape}
\usepackage{amsmath}
\usepackage{acronym}
\usepackage{booktabs}
\usepackage{siunitx}
\usepackage[UKenglish]{babel}
\usepackage{microtype}
\usepackage{multirow}
\usepackage{tabularx}
\usepackage[retainorgcmds]{IEEEtrantools}
\usepackage{bm}
\usepackage{color}
\usepackage{longtable}
\usepackage{array}
\usepackage[normalem]{ulem}
\usepackage{authblk}
\usepackage{geometry}
\usepackage{makecell}
\usepackage{amsthm}
\usepackage{amssymb}
\usepackage{eurosym}

\newtheorem{remark}{Remark}
\newtheorem{proposition}{Proposition}

%% file: title.tex
\setcounter{equation}{0}

\title{Industrial Policy with Network Externalities: Race to the Bottom vs. Win-Win Outcome}
\author[a]{Nigar Hashimzade}
\author[b]{Haoran Sun}
\affil[a]{Department of Economics, Finance and Accounting, Brunel University of London\\
E-mail: nigar.hashimzade@brunel.ac.uk}
\affil[b]{Corresponding author\\
Department of Economics, Durham University, Durham, DH1 3LB, United Kingdom\\
E-mail: haoran.sun@durham.ac.uk}
\date{This version: March 30, 2026\footnote{Preliminary and incomplete. Please do not circulate without permission.}}

\renewcommand\Affilfont{\small}


\maketitle

\begin{abstract}
\normalsize
Industrial policy has returned to the centre of economic governance, particularly in the high-tech sectors where positive network externalities in demand make market dominance self-reinforcing. This paper studies the welfare effects of an industrial policy targeting a sector with network externalities in a two-country model with strategic trade and R\&D investment. We show how the welfare consequences of this policy are determined by the interaction between the strength of the externality, the type of R\&D, and the degree of product differentiation between the home and the imported goods. When externalities are weak or the goods are close substitutes, the business-stealing effect produces a race to the bottom that dissipates more surplus than it creates. Under sufficiently strong externalities and weak substitutability or complementarity of the goods, industrial policy competition can make both countries simultaneously better off compared to the laissez-faire outcome because of the mutual business-enhancement effect. The case is stronger for the product innovation than for the process innovation, as the former directly affects the demand and triggers a stronger network effects than the latter which operates indirectly through the supply. Thus, the network externalities create an opportunity for a win-win industrial policies, but its realisation depends on the market structure and the nature of innovation.

\noindent \textbf{JEL Classification:} F13, H25, L13, O38

\bigskip

\noindent \textbf{Keywords:} Industrial policy, network externalities, R\&D subsidies, strategic trade, Cournot competition

\end{abstract}

%% file: main.tex
\setcounter{equation}{0}

\section{Introduction}\label{sec:intro}

The global resurgence of industrial strategy is increasingly focused on high-tech sectors often characterised by positive network externalities in demand, where the value of a product or a technology depends on the breadth of its adoption. Governments are simultaneously subsidising domestic firms and deterring foreign rivals using fiscal instrument, such as  import tariffs or digital service taxes, and stringent regulations. The strategic-trade literature provides a well-established framework for analysing subsidies and taxes in oligopolistic settings, but that framework does not account for network externalities in demand. Whether the  policy mix that governments are pursuing is welfare-improving or self-defeating in these sectors remains an open question. This paper establishes a theoretical framework delineating the conditions under which such strategies succeed or fail.

The status of industrial policy as a centrepiece of economic governance is not a new phenomenon, but rather a returning one. In the decades after the Second World War, governments in both advanced and developing economies routinely used tariffs, subsidies, public procurement, and directed credit to promote domestic industry and technological upgrading \citep{Juhasz.etal_2024, Aiginger.Rodrik_2020}. By the 1980s and 1990s, however, industrial strategy had fallen out of favour across much of the developed world. Globalisation, trade liberalisation, and growing scepticism about the state's ability to ``pick winners'' shifted the consensus toward market allocation and away from selective intervention \citep{Pack.Saggi_2006, Krueger_1990}. That consensus has been diluted by recent global shocks. The COVID-19 pandemic exposed the fragility of global supply chains, while geopolitical rivalry and concerns about technological sovereignty have made dependence on a small number of dominant foreign suppliers appear increasingly costly \citep{Millot.Rawdanowicz_2024}. As a result, governments across the major economies have become willing to intervene directly in production, innovation, and market structure in sectors viewed as strategically important.

Semiconductors provide the clearest example. The US CHIPS and Science Act of 2022 committed approximately \$52 billion to domestic semiconductor manufacturing \citep{chips_act_2022}, while the EU Chips Act aims to mobilise more than \euro{}43 billion \citep{EuropeanCommission_2023}. These initiatives combine support for domestic production with measures such as taxes and export restrictions, that erode the market share of foreign competitors. Moreover, electric vehicles, AI platforms, and renewable energy ecosystems all exhibit network externalities in demand, where the value of adoption depends on the breadth of the existing user base and complementary infrastructure \citep{Li.etal_2017}. The sectors where industrial policy is most actively deployed are precisely the sectors where network externalities are strongest.

In a standard strategic-trade setting, a government may wish to subsidise its domestic firm and tax the domestic sales of the foreign firm in order to shift oligopolistic rents toward home producers. When the demand is shaped by network externalities, however, policy affects not only static rents but also the self-reinforcing dynamics of adoption and market position, thus becoming a double-edged sword. A subsidy that increases the domestic firm's output strengthens the network externality associated with the domestic good, which in turn raises demand further. A tax on the foreign firm not only reduces its market share directly but weakens the network externality associated with the foreign good, compounding the competitive disadvantage. Each instrument reinforces the other through the network externality. On the other hand, if the foreign good benefits from a strong network externality, taxing that firm is particularly costly to home consumers, who lose access to the network externality associated with the foreign good. The optimal policy mix in such an environment can be qualitatively different from the standard case.

Despite the policy importance of this setting, no existing framework brings these elements together. The classical strategic-trade literature analyses how governments can use taxes or subsidies to alter the outcome of oligopolistic rivalry \citep{Spencer.Brander_1983, Brander.Spencer_1985, Eaton.Grossman_1986}. Later extensions by \cite{Leahy.Neary_2001} and \cite{Haaland.Kind_2008} study R\&D subsidies with spillovers and business-stealing motives, but none of these models incorporates network externalities in demand. Separately, building on the foundational contributions of \cite{Katz.Shapiro_1985, Katz.Shapiro_1986}, \cite{Farrell.Saloner_1985, Farrell.Saloner_1986}, and the consumption-side framework of \cite{Grilo.etal_2001}, the network-externality literature explains how adoption feedback and compatibility shape market outcomes, but does not address optimal government policy in an international setting. Recent work by \cite{Naskar.Pal_2020} introduces network externalities into R\&D competition but omits the government's policy choice entirely. The emerging literature on the new industrial policy \citep{Juhasz.etal_2024} has revived interest in the design of state intervention but lacks structural models of optimal policy design in sectors with network externalities. The interaction of strategic trade policy instruments with network externalities in demand remains unexplored. In our work we seek to fill this gap.

We develop a two-country, two-firm model in which the firms invest in R\&D and sell differentiated goods that exhibit network externalities in demand. Governments support their producers by subsidising their R\&D outlays. In addition, the government of the importing country taxes the sales of the foreign good. We derive the equilibrium  under two R\&D specifications, cost-reducing, or process R\&D, and quality-enhancing, or product R\&D, and use  numerical exercises to illustrate the welfare consequences of policy interaction. A central question is whether the pursuit of industrial strategy is welfare-improving or leads to a self-defeating race to the bottom in the equilibrium.

Several results emerge. First, both government optimally choose to subsidise the R\&D investment of their firms. This is consistent with the observed escalation of subsidy commitments in critical sectors across the US, EU, and China. Second, the sign of the optimal import tax and the welfare consequences of this policy interaction depend on both the strength of network externalities and the degree of product differentiation. Under the process R\&D, the home country (the importer) gains at the expense of the foreign country (the exporter) regardless of the product differentiation. When the network externalities are weak, the policy competition destroys more surplus than it creates: the exporter's loss exceeds the importer's gain, so that collectively the countries would be better off without intervention. When the network externalities are strong enough, the importer's gain outweighs the exporter's loss, and policy competition raises aggregate welfare above laissez-faire. Under the product R\&D, the pattern changes when the home and foreign goods are sufficiently differentiated. When the home and the foreign goods are weak substitutes or complements, the optimal tax turns negative when the network externality is sufficiently strong: rather than taxing the foreign rival, the home government finds it optimal to subsidise it. Import subsidy catalyses market-wide adoption through externality in demand. The gains from a larger, more valuable technological ecosystem outweigh the fiscal expenditure and the loss of domestic market share, turning a strategic rivalry into a win-win outcome, where both countries are simultaneously better off than under laissez-faire. Under relatively strong substitutability, the domestic firm's profit loss from reduced market share outweighs the gain in consumer surplus, and the home government choses to tax imports.  

The win-win result is not observed in our model under the process R\&D because the cost-reducing innovation shifts the supply curve and triggers the externality effect only indirectly, through the quantities demanded in the equilibrium. In contrast, the product R\&D shifts the demand curve and triggers the externality directly. Hence, the combination of subsidies to R\&D and imports creates a stronger mechanism for internalising the externality, especially when the goods are complements or sufficiently weak substitutes.
The win-win result in our framework, therefore, requires a combination of three conditions: strong network externalities, demand-side innovation (product R\&D), and sufficient product differentiation. Network externalities create the possibility; market structure determines whether it is realised.

The model admits a broader interpretation that connects these theoretical findings to the industrial realities of high-tech supply chains. The demand structure driven by network externalities in consumption is formally analogous to derived demand from a downstream sector of profit-maximising firms using the goods as intermediate inputs. Under this interpretation, network externalities correspond to agglomeration effects, where the value of an input increases as its adoption grows across the downstream sector through cross-firm compatibility, shared technical standards, and a deeper pool of specialised complementary services. The government's objective can then be read as maximisation of downstream industrial surplus. We maintain consumer-centric terminology throughout, but the results apply equally to competition for dominance in global intermediate-good markets.

The remainder of the paper is organised as follows. Section~\ref{sec:literature} reviews the related literature. Section~\ref{sec:model} presents the model. Section~\ref{sec:equilibrium} derives the equilibrium and states the analytical results for both process and product R\&D. Section~\ref{sec:simulations} uses numerical exercises to characterise the welfare effects of non-cooperative industrial policy under each R\&D specification, and identifies the conditions under which intervention becomes self-defeating or mutually beneficial. Section~\ref{sec:conclusion} concludes. Supplementary Material contains policy-instrument figures, additional simulation exercises, and a discussion of limitations and future research directions.

\section{Related Literature}\label{sec:literature}

This paper bridges two distinct strands of research. The first is the strategic-trade and R\&D-policy literature, which studies how governments use subsidies or trade instruments to alter the outcome of oligopolistic rivalry. The second is the literature on network externalities, compatibility, and network competition, which explains how network externalities in demand can make market leadership self-reinforcing. Existing work shows important parts of the problem, but there is still no integrated framework for analysing how a government should combine a tax on a foreign rival with domestic R\&D subsidies when firms compete in a market with network externalities.


The strategic-trade literature provides the starting point. \cite{Spencer.Brander_1983} show that in an international duopoly with process R\&D, a domestic government may wish to subsidise innovation in order to shift profits from the foreign to the home firm. \cite{Brander.Spencer_1985} extend the broader strategic-trade logic by showing how policy can shift oligopolistic rents in quantity competition, while \cite{Eaton.Grossman_1986} demonstrate that the direction of optimal intervention depends on the mode of competition. Together, these papers established the central insight that government policy can act as a strategic commitment device in imperfectly competitive international markets.

Later extensions of this framework that are directly relevant for the present paper include, for example, \cite{Leahy.Neary_2001}, who study industrial policy with process and product R\&D and innovation spillovers. This work shows that non-cooperative policy competition may generate over- or under-subsidisation and can leave both countries worse off. \cite{Lin.Saggi_2002}  analyse the sequential choice of product and process R\&D investments. They show that process R\&D and product R\&D can be complementary, since quality improvement raises the return to cost reduction and cost reduction supports larger output and hence greater returns to quality. \cite{Haaland.Kind_2006} analyse cooperative and non-cooperative R\&D policy when governments can subsidise both process and product R\&D, and show that the strength of the strategic subsidy motive depends on the degree of product differentiation. In a subsequent study, \cite{Haaland.Kind_2008} show, in a two-country framework, that the interaction between process R\&D and trade policy does not always lead to excessively high subsidies, but, instead, can result in either symmetric or asymmetric equilibria, where only one firm survives and the other firm is inactive, depending on the degree of product differentiation. \cite{Symeonidis_2003, Ishii_2014, Hoefele_2016} further analyse product R\&D in international oligopoly, showing how product differentiation, competition mode, and trade exposure interact in shaping innovation incentives. 

Other extensions of the strategic-trade and R\&D framework address trade liberalisation with firm heterogeneity \citep{Long.etal_2011}, technology licensing and R\&D incentives \citep{Chang.etal_2013}, leader-follower R\&D subsidy dynamics in which an R\&D subsidy can eliminate a foreign first-mover advantage through competitiveness-shifting rather than pure profit-shifting \citep{GarciaPires_2009}, R\&D information disclosure \citep{Baik.Kim_2020}, and product R\&D under trade costs \citep{Yang_2018}. These results highlight that the joint effect of tariffs and R\&D incentives cannot be evaluated independently of market structure. However, this work does not incorporate network externalities in demand.

Closer to the present paper, \cite{Naskar.Pal_2020} bring network externalities into the analysis of process R\&D. They show that stronger network effects raise firms' incentives to invest in cost-reducing innovation and can even overturn the standard Cournot-Bertrand ranking of R\&D incentives. However, in their framework the two firms operate in a closed economy, and the government is inactive. More recently, \cite{Ghosh.etal_2024} study optimal trade policy in an export-rivalry model of differentiated network goods and show that network externalities can alter the sign of the optimal trade instrument. Their results demonstrate that network effects can change strategic-trade prescriptions qualitatively. At the same time, it does not study R\&D investment and innovation policies that are central in this study.


The modern theory of network externalities and their role in competition is built on the classic contributions of \cite{Katz.Shapiro_1985, Katz.Shapiro_1986} and \cite{Farrell.Saloner_1985, Farrell.Saloner_1986}. These papers show how compatibility choices, consumer expectations, and network externalities can generate self-reinforcing market positions and path dependence. A firm that gains an early lead may benefit not only from static demand but also from the expectation that more users, more complementary goods, and greater compatibility will attach to its product or service in the future. This is precisely the broad economic logic that motivates the network parameter in the present model.

\cite{Economides_1996} provides a general survey of network economics showing how network effects enter demand and pricing decisions. Our framework is close to that of \cite{Grilo.etal_2001} in modelling conformity effects directly on the demand side. Their framework provides a useful bridge between the classic literature on network goods and the reduced-form specification used in this paper, where willingness to pay depends positively on aggregate adoption.

A related public-finance literature studies the effect of network externalities on optimal taxation of platform-type markets; see \cite{Kind.Schjelderup_2025} for a recent review. \cite{Kind.etal_2008, Kind.etal_2009, Kind.etal_2010} show that in multi-sided markets standard tax-incidence results can fail and that taxes may affect welfare in ways that qualitatively differ from those for one-sided markets. This logic is extended to digital media markets in \cite{Kind.Koethenbuerger_2018}. \cite{Cui.Hashimzade_2019} provide a useful conceptual foundation for the tax instrument by interpreting the digital services tax as a levy on location-specific rent. While this literature does not analyse the strategic-trade environment studied here, it demonstrates an analogous result: in the presence of network externalities the effects of taxation are no longer well captured by standard intuition.


Recent geoeconomic development renewed the interest in industrial policy among economists. \cite{Aiginger.Rodrik_2020} argue that industrial policy has re-emerged as a serious field of economic analysis after decades in which it was often treated with scepticism. \cite{Juhasz.etal_2024} survey the new economics of industrial policy and show that the recent literature is more empirically disciplined and more attentive to market failures, coordination problems, and learning effects than earlier debates. \cite{Millot.Rawdanowicz_2024} place this revival in the current policy environment, emphasising national security, supply-chain resilience, decarbonisation, and technological sovereignty as key motives for renewed intervention.

These themes are directly relevant to sectors in which network effects or ecosystem effects matter. While large body of work has focussed on externalities in the use of digital platforms, network effects are also relevant well beyond purely digital markets, whenever compatibility, shared infrastructure, complementary software, and trained users or engineers increase the value of adoption.  Semiconductor manufacturing policy is the clearest example. Recent work by \cite{Erten.etal_2025} provides early evidence on the employment effects of the US CHIPS Act and points to meaningful local spillovers from semiconductor investment. In the electric-vehicle market, for example, vehicle adoption and charging-station investment are linked by indirect network effects: more charging infrastructure raises the attractiveness of electric vehicles, and a larger fleet of electric vehicles raises the return to charging investment. \cite{Li.etal_2017} provide direct evidence of these feedback effects. On the theoretical side, \cite{Long.Cai_2023}, motivated directly by CHIPS Act-style programmes, study why governments subsidise R\&D-intensive foreign direct investment and show that when domestic absorptive capacity is high, R\&D-targeted FDI subsidies can be welfare-superior to output subsidies or technology transfer mandates. 

More broadly, current policies in semiconductors, EVs, and AI combine support for domestic production and innovation with measures that disadvantage foreign rivals through taxes, tariffs, export controls, procurement rules, or regulation. The exact instrument differs across sectors, but the common policy question is how governments should balance the support for domestic innovation against the intervention directed at foreign competitors. Our paper contributes to the literature by providing a theoretical structure for evaluation of the government intervention when industrial policy interacts with externalities and the market structure. 

\section{The Model}\label{sec:model}

Consider two countries, home and foreign, each with one firm producing a differentiated good that exhibits network externalities in demand. Good~1 is produced by the foreign firm (Firm $1$) and good~2 by the home firm (Firm $2$). A third good, good~0, is produced by a perfectly competitive sector using one unit of labour per unit of output and is traded freely and costlessly; it serves as the num\'{e}raire. All goods are produced using labour as the only input.

For tractability, we model consumers (or, equivalently, downstream industry; see Remark~\ref{rem:isomorphism} below) only in the home country. Thus, the home country is the importer and the foreign country is the exporter. Home consumers supply inelastically $L$ units of labour, own equal shares in the home firm, and receive a lump-sum transfer from the government. Both goods exhibit network externalities in consumption and are imperfect substitutes. Both firms can invest in (cost-reducing) process innovation and (demand-enhancing) product innovation. Each government subsidises its own firm's R\&D costs, and the home government levies a sales tax on the foreign firm's product. The interaction is modelled as a two-stage simultaneous-move game:
\begin{description}
    \item[Stage~1.] Both governments simultaneously announce their tax and subsidy policies.
    \item[Stage~2.] Both firms simultaneously choose output quantities and R\&D investments.
\end{description}
The equilibrium concept is sub-game perfect Nash equilibrium, obtained by backward induction.

\subsection{Demand}\label{sec:demand}

The representative home-country consumer maximises the quasi-linear utility function \citep[][]{Singh.Vives_1984}:
\begin{equation}\label{eq:utility}
    \max_{q_0,\,q_1,\,q_2}\; U = A_1\, q_1 + A_2\, q_2 - \frac{1}{2}\!\left(q_1^2 + q_2^2 + 2m\,q_1 q_2\right) + q_0
\end{equation}
subject to the budget constraint
\begin{equation}\label{eq:budget}
    p_1 q_1 + p_2 q_2 + q_0 = M = L + \pi_2 + T,
\end{equation}
where $L$ is labour income, $\pi_2$ is the home firm's profit, and $T$ is the lump-sum government transfer. The parameter $m \in [-1,1]$ measures the degree of substitutability between goods~1 and~2: the goods are homogeneous when $m=1$, substitutes when $m > 0$, complements when $m < 0$, and independent when $m = 0$.

Maximisation yields the inverse demand functions:
\begin{equation}\label{eq:invdemand}
    p_i = A_i - q_i - m\,q_j, \qquad i \neq j,\quad i,j \in \{1,2\}.
\end{equation}
Here $A_i$ is the choke price, or the maximum willingness to pay when quantity demanded is zero. Two channels determine $A_i$. The first is the network externality effect, $a_i$, and the second is the product quality effect, $r_i$. We assume that these two effects are independent and additive,
\begin{equation}\label{eq:Ai}
    A_i = a_i + r_i,
\end{equation}

 The quality of good~$i$, measured by $r_i$, is determined by costly investment in product R\&D by firm $i$.

\paragraph{Network externalities.}
Following \cite{Leibenstein_1950} and \cite{Grilo.etal_2001}, and related work by \cite{Naskar.Pal_2020}, we model positive network externalities as a positive effect of the aggregate consumption on the willingness to pay.\footnote{\cite{Naskar.Pal_2020} also assume that the individual willingness to pay is an increasing function of other buyers' demand, but use a different functional specification.} Specifically:
\begin{equation}\label{eq:network}
    a_i = a + b_i Q_i,
\end{equation}
where $Q_i$ is the perceived aggregate demand for good~$i$, $a > 0$ is a baseline demand parameter, and $b_i \in (0,1)$ measures the strength of the network externality for good~$i$.

Note that in our framework, unlike \cite{Grilo.etal_2001} where each consumer buys one unit, the representative consumer purchases $q_i$ units. The network externality therefore enters through aggregate quantity rather than the number of consumers. In a fulfilled-expectations equilibrium, perceived and actual market demand coincide: $Q_i = q_i$.

Substituting \eqref{eq:network} into \eqref{eq:invdemand}, the demand system becomes:
\begin{align}
    p_1 &= a + r_1 - (1 - b_1)\,q_1 - m\,q_2, \label{eq:demand1}\\
    p_2 &= a + r_2 - (1 - b_2)\,q_2 - m\,q_1. \label{eq:demand2}
\end{align}
The network externality reduces the effective slope of the demand curve from~1 to $(1 - b_i)$. Stronger network effects make demand less responsive to own-quantity increases, because higher consumption reinforces willingness to pay.



\subsection{Firms}\label{sec:firms}

Both firms are profit maximisers. The foreign firm (firm~1) faces a sales tax at rate $t \in (0,1)$ levied by the home government on revenue from sales to home consumers.\footnote{This revenue-based tax structure captures a class of policy instruments that have proliferated internationally since the mid-2010s, including digital services taxes, equalisation levies, and withholding taxes on cross-border service payments. See Supplementary Material, Section~\ref{sm:tax_instruments} for a discussion and country-specific examples.} The home firm (firm~2) faces no tax on domestic sales. There are no transport costs.

Each firm $i$ can invest in two types of R\&D:
\begin{itemize}
    \item \textit{Process R\&D} reduces marginal production cost from $c_i$ to $(c_i - k_i)$, at a cost of $\frac{\varphi_i k_i^2}{2}$;
    \item \textit{Product R\&D} raises demand by $r_i$, at a cost of $\frac{\theta_i r_i^2}{2}$,
\end{itemize}
where $\varphi_i$ and $\theta_i$ are (inverse) R\&D efficiency parameters for process and product R\&D respectively. Each government subsidises its own firm's R\&D at rates $s_i$ (process) and $\sigma_i$ (product), so that the firm's net R\&D cost is $(1 - s_i)\frac{\varphi_i k_i^2}{2} + (1 - \sigma_i)\frac{\theta_i r_i^2}{2}$.

Profits are:
\begin{align}
    \pi_1 &= (1-t)\,p_1\,q_1 - (c_1 - k_1)\,q_1 - (1-s_1)\,\frac{\varphi_1 k_1^2}{2} - (1-\sigma_1)\,\frac{\theta_1 r_1^2}{2}, \label{eq:profit1}\\[6pt]
    \pi_2 &= p_2\,q_2 - (c_2 - k_2)\,q_2 - (1-s_2)\,\frac{\varphi_2 k_2^2}{2} - (1-\sigma_2)\,\frac{\theta_2 r_2^2}{2}. \label{eq:profit2}
\end{align}

\subsection{Governments}\label{sec:governments}

The foreign government maximises its firm's profit net of subsidy expenditure:
\begin{equation}\label{eq:W1}
    W_1 = \pi_1 - s_1\,\frac{\varphi_1 k_1^2}{2} - \sigma_1\,\frac{\theta_1 r_1^2}{2}.
\end{equation}

The home government maximises the welfare of its representative consumer, which includes consumer surplus, the home firm's profit, and net tax revenue (tax collected minus subsidies paid), all returned as a lump-sum transfer:
\begin{equation}\label{eq:W2}
    W_2 = U + \pi_2 - s_2\,\frac{\varphi_2 k_2^2}{2} - \sigma_2\,\frac{\theta_2 r_2^2}{2} + t\,p_1\,q_1.
\end{equation}

\begin{remark}[Supply-Chain Isomorphism]\label{rem:isomorphism}
\normalfont
While derived from a representative consumer's utility, the demand system in \eqref{eq:demand1}--\eqref{eq:demand2} is mathematically isomorphic to the derived demand of a downstream industry using intermediate inputs. Specifically, if a downstream sector employs a quadratic production technology, the profit-maximisation conditions for these firms yield an identical linear mapping between input prices and quantities. In this context, the network externality captures productive spillovers, such as technical compatibility, shared infrastructure, or ecosystem-specific human capital, which increase an input's value as its adoption grows. Under this reinterpretation, the consumer surplus in the home government’s objective function is replaced by the downstream industrial rents. The strategic equilibrium between upstream producers remains invariant to this change in narrative, ensuring that the welfare results extend to competition for dominance in intermediate-good markets.
\end{remark}

\subsection{Stage~2: Profit Maximisation}\label{sec:stage2}

We solve the game by backward induction, beginning with Stage~2. Taking government policies $(t, s_1, \sigma_1, s_2, \sigma_2)$ as given, each firm simultaneously chooses output and R\&D outlays to maximise profit.

Substituting the demand functions~\eqref{eq:demand1}--\eqref{eq:demand2} into the profit functions and differentiating, the first-order conditions for the foreign firm are:
\begin{align}
    \frac{\partial \pi_1}{\partial q_1} &= (1-t)\left[a + r_1 - 2(1-b_1)\,q_1 - m\,q_2\right] - (c_1 - k_1) = 0, \label{eq:foc_q1}\\[4pt]
    \frac{\partial \pi_1}{\partial k_1} &= q_1 - (1-s_1)\,\varphi_1\,k_1 = 0, \label{eq:foc_k1}\\[4pt]
    \frac{\partial \pi_1}{\partial r_1} &= (1-t)\,q_1 - (1-\sigma_1)\,\theta_1\,r_1 = 0. \label{eq:foc_r1}
\end{align}

These yield:
\begin{equation}\label{eq:firm1_reaction}
    q_1 = \frac{1}{2(1-b_1)}\left[a + r_1 - m\,q_2 - \frac{c_1 - k_1}{1-t}\right],\quad
    k_1 = \frac{q_1}{(1-s_1)\,\varphi_1},\quad
    r_1 = \frac{(1-t)\,q_1}{(1-\sigma_1)\,\theta_1}.
\end{equation}

Similarly, the home firm's optimal choices satisfy:
\begin{equation}\label{eq:firm2_reaction}
    q_2 = \frac{1}{2(1-b_2)}\left[a + r_2 - m\,q_1 - (c_2 - k_2)\right],\quad
    k_2 = \frac{q_2}{(1-s_2)\,\varphi_2},\quad
    r_2 = \frac{q_2}{(1-\sigma_2)\,\theta_2}.
\end{equation}

In the choice of R\&D  investment, each firm equates the marginal benefit of R\&D (additional output, or additional revenue for product R\&D) to its marginal cost (net of subsidy). Higher subsidies and lower R\&D cost parameters increase R\&D intensity for any given level of output.

The equilibrium with both R\&D channels active is derived in Section~\ref{sec:general_equilibrium}. The analysis then proceeds for two separate cases, process R\&D only and product R\&D only.

\section{Equilibrium Analysis}\label{sec:equilibrium}

We solve the Stage~2 equilibrium by backward induction. This section first considers the general case in which both process and product R\&D are active, then analyses two separate cases that isolate the mechanism: process R\&D only (Section~\ref{sec:processRD}) and product R\&D only (Section~\ref{sec:productRD}). Throughout, we focus on the substitute-goods case ($m > 0$) as the primary setting; this is arguably the empirically most relevant case for competing strategic goods such as rival chip architectures or EV ecosystems. The analysis of the complements case ($m < 0$) is straightforward; the details and discussion are provided in the Supplementary Material (Section~\ref{sm:complements}).

\subsection{Equilibrium with Both R\&D Channels}\label{sec:general_equilibrium}

Substituting the R\&D conditions $k_i = q_i / [(1-s_i)\varphi_i]$ and $r_i$ from \eqref{eq:firm1_reaction}--\eqref{eq:firm2_reaction} back into the output equations yields a linear system in $(q_1, q_2)$. For the foreign firm:
\begin{equation}\label{eq:general_q1}
    \Gamma_1\, q_1 = a - \frac{c_1}{1-t} - m\,q_2,
\end{equation}
where
\begin{equation}\label{eq:Gamma1}
    \Gamma_1 \;\equiv\; 2(1-b_1) - \frac{1-t}{(1-\sigma_1)\theta_1} - \frac{1}{(1-t)(1-s_1)\varphi_1}
\end{equation}
captures the effective slope of the foreign firm's best-response function, inclusive of both R\&D feedbacks. The first subtracted term reflects product R\&D. Investment in quality raises willingness to pay for the foreign good, flattening the residual demand curve. The second reflects process R\&D. Cost reduction raises the effective intercept by the same mechanism as in the standard Brander--Spencer model.

For the home firm:
\begin{equation}\label{eq:general_q2}
    \Gamma_2\, q_2 = a - c_2 - m\,q_1,
\end{equation}
where
\begin{equation}\label{eq:Gamma2}
    \Gamma_2 \;\equiv\; 2(1-b_2) - \frac{1}{(1-\sigma_2)\theta_2} - \frac{1}{(1-s_2)\varphi_2}.
\end{equation}

Solving the system of two best responses gives for the quantities in an interior equilibrium:
\begin{align}
    q_1 &= \frac{\displaystyle\frac{1}{\Gamma_1}\!\left[a - \frac{c_1}{1-t}\right] - \frac{m}{\Gamma_1\,\Gamma_2}\left[a - c_2\right]}{\Delta^{kr}}, \label{eq:q1_general}\\[8pt]
    q_2 &= \frac{\displaystyle\frac{1}{\Gamma_2}\left[a - c_2\right] - \frac{m}{\Gamma_1\,\Gamma_2}\!\left[a - \frac{c_1}{1-t}\right]}{\Delta^{kr}}. \label{eq:q2_general}
\end{align}
where
\begin{equation}\label{eq:Delta_general}
    \Delta^{kr} \;\equiv\; 1 - \frac{m^2}{\Gamma_1\,\Gamma_2},
\end{equation}

Once $q_1$ and $q_2$ are determined, all four R\&D investment levels follow immediately: $k_i = q_i / [(1-s_i)\varphi_i]$, $r_1 = (1-t)q_1 / [(1-\sigma_1)\theta_1]$, and $r_2 = q_2 / [(1-\sigma_2)\theta_2]$.

An interior equilibrium requires $\Gamma_i > 0$ and $\Delta^{kr} > 0$ . We focus on the set of parameters for which both conditions are satisfied.

The two R\&D channels enter $\Gamma_i$ additively, each reducing the effective slope of the firm's best-response function and thereby strengthening the equilibrium response to any policy change. We now proceed to the analysis of each type of innovation separately. 

\subsection{Process R\&D}\label{sec:processRD}

Setting $r_1 = r_2 = 0$ simplifies $\Gamma_i$, leading to the equilibrium described by
\begin{align}
    q_1 &= \frac{1}{2(1-b_1)}\left[a - mq_2 - \frac{c_1 - k_1}{1-t}\right], &
    k_1 &= \frac{q_1}{(1-s_1)\varphi_1}, \label{eq:proc_sys1}\\[4pt]
    q_2 &= \frac{1}{2(1-b_2)}\left[a - mq_1 - (c_2 - k_2)\right], &
    k_2 &= \frac{q_2}{(1-s_2)\varphi_2}. \label{eq:proc_sys2}
\end{align}
Substituting the R\&D equations into the output equations and solving for outputs gives 
\begin{align}
    q_1 &= \frac{\displaystyle\frac{1}{2(1-b_1)}\!\left[a - \frac{c_1}{1-t}\right]\!\left[1 - \frac{1}{2(1-b_2)(1-s_2)\varphi_2}\right] - \frac{m(a-c_2)}{2(1-b_2)}}{\Delta^k}, \label{eq:q1_proc}\\[8pt]
    q_2 &= \frac{\displaystyle\frac{1}{2(1-b_2)}\left[a - c_2\right]\!\left[1 - \frac{1}{2(1-b_1)(1-t)(1-s_1)\varphi_1}\right] - \frac{m}{2(1-b_1)}\!\left[a - \frac{c_1}{1-t}\right]}{\Delta^k}. \label{eq:q2_proc}
\end{align}
Investment levels then follow from $k_i = q_i / [(1-s_i)\varphi_i]$.

Here,
\begin{equation}\label{eq:Delta_k}
    \Delta^k \;\equiv\; \left[1 - \frac{1}{2(1-b_1)(1-t)(1-s_1)\varphi_1}\right]\!\left[1 - \frac{1}{2(1-b_2)(1-s_2)\varphi_2}\right] - \frac{m^2}{4(1-b_1)(1-b_2)}.
\end{equation}
An interior equilibrium with strictly positive quantities requires $\Delta^k > 0$, which is guaranteed for $0 < b_i < 1$, $0 < s_i < 1$, and $m$ not too large.\footnote{The second-order conditions for each firm's profit maximisation also require $1 - \frac{1}{2(1-b_1)(1-t)(1-s_1)\varphi_1} > 0$ and $1 - \frac{1}{2(1-b_2)(1-s_2)\varphi_2} > 0$, which hold under the maintained parameter restrictions. See Appendix~\ref{app:derivations_process} for details.} The equilibrium quantities are:

\begin{proposition}[Policy effects on output under process R\&D]\label{prop:policy_process}
Under the maintained assumptions ($\Delta^k > 0$, $b_i \in (0,1)$, $s_i \in (0,1)$), the following comparative statics hold:

\medskip
\noindent\textbf{Foreign firm:}
\begin{enumerate}
    \item[\textup{(i)}] $\dfrac{\partial q_1}{\partial t} < 0$, \; $\dfrac{\partial k_1}{\partial t} < 0$: \; the home tax reduces foreign output and process R\&D.
    \item[\textup{(ii)}] $\dfrac{\partial q_1}{\partial s_1} > 0$, \; $\dfrac{\partial k_1}{\partial s_1} > 0$: \; the foreign subsidy increases foreign output and process R\&D.
    \item[\textup{(iii)}] $\dfrac{\partial q_1}{\partial s_2} \leq 0$ for $m \geq 0$: \; the home subsidy reduces foreign output when goods are substitutes, with equality when $m = 0$.
\end{enumerate}

\medskip
\noindent\textbf{Home firm:}
\begin{enumerate}
    \item[\textup{(iv)}] $\dfrac{\partial q_2}{\partial t} \geq 0$, \; $\dfrac{\partial k_2}{\partial t} \geq 0$ for $m \geq 0$: \; the home tax on the foreign rival increases home output and R\&D when goods are substitutes.
    \item[\textup{(v)}] $\dfrac{\partial q_2}{\partial s_2} > 0$, \; $\dfrac{\partial k_2}{\partial s_2} > 0$: \; the home subsidy always increases home output and R\&D.
    \item[\textup{(vi)}] $\dfrac{\partial q_2}{\partial s_1} \leq 0$ for $m \geq 0$: \; the foreign subsidy reduces home output when goods are substitutes.
\end{enumerate}
\end{proposition}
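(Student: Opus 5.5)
Write the reduced best responses as $\Gamma_1 q_1 + m q_2 = X_1$ and $m q_1 + \Gamma_2 q_2 = X_2$. With $r_i=0$,
\[
\Gamma_1 = 2(1-b_1) - \frac{1}{(1-t)(1-s_1)\varphi_1},\qquad
\Gamma_2 = 2(1-b_2) - \frac{1}{(1-s_2)\varphi_2},
\]
and $X_1 = a - c_1/(1-t)$, $X_2 = a - c_2$. By Cramer's rule,
\[
q_1 = \frac{\Gamma_2 X_1 - m X_2}{D},\qquad q_2 = \frac{\Gamma_1 X_2 - m X_1}{D},\qquad D = \Gamma_1\Gamma_2 - m^2 .
\]
Here $D = 4(1-b_1)(1-b_2)\Delta^k > 0$, and the second-order conditions in the footnote give $\Gamma_i > 0$. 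This reproduces \eqref{eq:q1_proc}--\eqref{eq:q2_proc}.

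Rather than differentiating these ratios directly, I will use implicit differentiation of the linear system. Let $z$ be any policy parameter. Then
\[
\Gamma_1 \dot q_1 + m \dot q_2 = \dot X_1 - \dot\Gamma_1 q_1,\qquad m\dot q_1 + \Gamma_2 \dot q_2 = \dot X_2 - \dot\Gamma_2 q_2 .
\]
Write $R_i = \dot X_i - \dot \Gamma_i q_i$ for the direct shock to firm $i$'s best response. Then
\[
\dot q_1 = \frac{\Gamma_2 R_1 - m R_2}{D},\qquad \dot q_2 = \frac{\Gamma_1 R_2 - m R_1}{D}.
\]
Every policy instrument shifts only one firm's equation, so each item reduces to signing a single $R_i$.

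\begin{itemize}
\item $z=t$. Here $R_2=0$. Also $\dot X_1 = -c_1/(1-t)^2 < 0$, assuming $c_1>0$, and $\dot\Gamma_1 = -1/[(1-t)^2(1-s_1)\varphi_1] < 0$. So $R_1 = \dot X_1 - \dot\Gamma_1 q_1 = -\frac{c_1 - q_1/((1-s_1)\varphi_1)}{(1-t)^2} = -\frac{c_1-k_1}{(1-t)^2}$. Hence $R_1<0$ exactly when the post-innovation marginal cost $c_1-k_1$ is positive, which I take as maintained for an economically meaningful equilibrium. Then $\dot q_1 = \Gamma_2 R_1/D < 0$, giving (i), and $\dot q_2 = -mR_1/D \ge 0$ for $m\ge 0$, giving (iv).
\item $z=s_1$. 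Here $R_2=0$ and $\dot X_1=0$, while $\dot \Gamma_1 = -1/[(1-t)(1-s_1)^2\varphi_1] < 0$. So $R_1 = -\dot\Gamma_1 q_1 > 0$ whenever $q_1>0$ (interior equilibrium). This gives (ii) for $q_1$ and (vi).
\item $z=s_2$. Symmetrically, $R_1 = 0$ and $R_2 = q_2/[(1-s_2)^2\varphi_2] > 0$. This gives (v) for $q_2$ and (iii), with equality at $m=0$.
\end{itemize}

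The R\&D statements follow from $k_i = q_i/[(1-s_i)\varphi_i]$.
\begin{itemize}
\item For $t$: $k_1$ and $k_2$ are proportional to $q_1$ and $q_2$ with coefficients independent of $t$, so their signs match those of the quantities.
\item For $s_i$ acting on $k_i$: $\partial k_i/\partial s_i = \dot q_i/[(1-s_i)\varphi_i] + q_i/[(1-s_i)^2\varphi_i]$. Both terms are positive.
\end{itemize}

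The main obstacle is the sign of $\partial q_1/\partial t$. The tax both lowers the effective intercept and weakens the process-R\&D feedback, and these combine cleanly only through the identity $R_1 = -(c_1-k_1)/(1-t)^2$. This makes explicit that the claim needs positive net marginal cost $c_1 - k_1 > 0$. I will state that condition as part of the maintained assumptions, since it is implicit in "interior equilibrium" but not in the listed restrictions.
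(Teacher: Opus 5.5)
Your proof is correct and is essentially the paper's argument: Appendix~A.3 totally differentiates the $4\times 4$ system in $(q_1,k_1,q_2,k_2)$ and applies Cramer's rule with determinant $\Delta^k$, which is your $2\times 2$ system after eliminating $k_i$ (your $D=4(1-b_1)(1-b_2)\Delta^k$), and your derivatives coincide with its formulas. Two small points in your favour: the paper's claim $\partial q_1/\partial t<0$ silently relies on $c_1-k_1>0$, which you rightly make explicit, and your chain-rule split makes the sign of $\partial k_i/\partial s_i$ immediate, whereas the paper asserts without comment that its bracket $1-\frac{1}{2(1-b_2)(1-s_2)\varphi_2}-\frac{m^2}{4(1-b_1)(1-b_2)}$ is positive (it equals $\Delta^k$ plus a positive term).
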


\begin{proof}
See Appendix~\ref{app:derivations_process}.
\end{proof}

The home tax $t$ reduces the foreign firm's net revenue per unit, directly discouraging its output and R\&D. When goods are substitutes, this contraction benefits the home firm through reduced competitive pressure, or the ``business-stealing'' effect. Each firm's own-government subsidy unambiguously increases its output and R\&D outlays by reducing the effective cost of innovation. Cross-subsidy effects depend on the degree of substitutability: when $m > 0$, a rival's subsidy intensifies competition and reduces the other firm's equilibrium output.

\begin{proposition}[Network externality effects on output under process R\&D]\label{prop:network_process}
Under the same maintained assumptions:
\begin{enumerate}
    \item[\textup{(i)}] $\dfrac{\partial q_i}{\partial b_i} > 0$, \; $\dfrac{\partial k_i}{\partial b_i} > 0$: \; a stronger network externality on own good consumption increases own output and process R\&D.
    \item[\textup{(ii)}] $\dfrac{\partial q_i}{\partial b_j} \leq 0$, \; $\dfrac{\partial k_i}{\partial b_j} \leq 0$ for $m \geq 0$ \textup{($i \neq j$)}: \; a stronger network externality in the rival's good consumption reduces own output and R\&D when goods are substitutes.
\end{enumerate}
\end{proposition}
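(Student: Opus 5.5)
The plan is to avoid differentiating the closed forms \eqref{eq:q1_proc}--\eqref{eq:q2_proc} directly. Instead, I will work with the linear best-response system behind them and apply the implicit function theorem (Cramer's rule) to it.

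Substituting $k_i = q_i/[(1-s_i)\varphi_i]$ from \eqref{eq:proc_sys1}--\eqref{eq:proc_sys2} into the output equations gives
\begin{equation*}
G_1 q_1 + m q_2 = a - \frac{c_1}{1-t}, \qquad m q_1 + G_2 q_2 = a - c_2,
\end{equation*}
where
\begin{equation*}
G_1 \equiv 2(1-b_1) - \frac{1}{(1-t)(1-s_1)\varphi_1}, \qquad G_2 \equiv 2(1-b_2) - \frac{1}{(1-s_2)\varphi_2}.
\end{equation*}
Comparing with \eqref{eq:Delta_k}, the determinant of this system is $D \equiv G_1 G_2 - m^2 = 4(1-b_1)(1-b_2)\,\Delta^k$. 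This is strictly positive because $\Delta^k>0$ and $b_i\in(0,1)$. The second-order conditions in the footnote to \eqref{eq:Delta_k} say exactly that $G_1>0$ and $G_2>0$. The right-hand sides of the system do not depend on $b_1$ or $b_2$, and $b_i$ enters only through $G_i$, with $\partial G_i/\partial b_i = -2$ and $\partial G_j/\partial b_i = 0$.

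Next, I will differentiate the system with respect to $b_1$, holding policies fixed. This gives $G_1\,\partial q_1/\partial b_1 + m\,\partial q_2/\partial b_1 = 2q_1$ and $m\,\partial q_1/\partial b_1 + G_2\,\partial q_2/\partial b_1 = 0$. By Cramer's rule,
\begin{equation*}
\frac{\partial q_1}{\partial b_1} = \frac{2 q_1 G_2}{D}, \qquad \frac{\partial q_2}{\partial b_1} = -\frac{2 m q_1}{D}.
\end{equation*}
With $q_1>0$ at the interior equilibrium, $G_2>0$ and $D>0$, the first expression is strictly positive. The second is $\le 0$ for $m\ge 0$, with equality at $m=0$. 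The case of $b_2$ is symmetric: $\partial q_2/\partial b_2 = 2q_2G_1/D>0$ and $\partial q_1/\partial b_2 = -2mq_2/D\le 0$.

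For the R\&D levels, note that $k_i = q_i/[(1-s_i)\varphi_i]$ with a coefficient that is positive and independent of $b_1,b_2$. Hence $\partial k_i/\partial b_j = \partial q_i/\partial b_j/[(1-s_i)\varphi_i]$, and each $k$-derivative inherits the sign of the corresponding $q$-derivative. This delivers parts (i) and (ii).

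The only delicate step is justifying the sign conditions used above: $G_i>0$, $D>0$ and $q_i>0$. I will take $D>0$ from the maintained assumption $\Delta^k>0$ via the identity $D=4(1-b_1)(1-b_2)\Delta^k$. I will take $G_i>0$ from the second-order conditions established in Appendix~\ref{app:derivations_process}. Finally, $q_i>0$ follows from the focus on interior equilibria with strictly positive quantities.
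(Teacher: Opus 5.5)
Your proposal is correct and follows essentially the same route as the paper, which totally differentiates the equilibrium system in $(q_1,k_1,q_2,k_2)$ with respect to $(b_1,b_2)$ and applies Cramer's rule with determinant $\Delta^k$. Like you, it signs the results using the second-order-condition brackets, $\Delta^k>0$, and positivity at the interior equilibrium. Eliminating $k_i$ first to get a $2\times 2$ system is only a cosmetic difference and gives identical expressions, since the paper's numerator $a-\frac{c_1-k_1}{1-t}-mq_2$ equals $2(1-b_1)q_1$ by the first-order condition; your $2q_i$ form also makes that numerator's positivity transparent.
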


\begin{proof}
See Appendix~\ref{app:derivations_process}.
\end{proof}

The network externality acts as an amplifier on the demand side. Stronger network externalities make the demand curve less steep (the effective slope is $1 - b_i$), so each unit of output generates greater willingness to pay. Under substitutability, this amplification spills over negatively to the rival. In other words, a stronger network externality for one firm's good crowds out the other. This mechanism is central to the policy analysis. It implies that the stakes of industrial policy are higher when network externalities are strong, because falling behind becomes self-reinforcing.

\begin{proposition}[Foreign subsidy best response under process R\&D]\label{prop:foreign_subsidy_process}
The foreign government's best response in the process R\&D subsidy competition is:
\begin{equation}\label{eq:s1_star}
    s_1^* = \frac{\dfrac{m^2}{4(1-b_1)(1-b_2)}}{1 - \dfrac{1}{2(1-b_2)(1-s_2)\varphi_2}}.
\end{equation}
In particular,
\begin{enumerate}
    \item[\textup{(i)}] $s_1^* > 0$ whenever $m \neq 0$: the foreign government subsidises its firm unless the goods are independent.
    \item[\textup{(ii)}] $s_1^*$ is independent of the home tax $t$.
    \item[\textup{(iii)}] $s_1^*$ is increasing in the home subsidy $s_2$ (i.e.\ $ds_1^*/ds_2 > 0$ for $m \neq 0$): subsidies are strategic complements.
\end{enumerate}
\end{proposition}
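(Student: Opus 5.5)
I will use the envelope-theorem approach familiar from \cite{Spencer.Brander_1983}, not differentiate the closed forms \eqref{eq:q1_proc}--\eqref{eq:q2_proc} directly.

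\textbf{Step 1: the marginal effect of $s_1$ on $W_1$.} Set $r_1=r_2=0$. Using \eqref{eq:W1}, the subsidy is a pure transfer, so
\[
W_1=(1-t)p_1q_1-(c_1-k_1)q_1-\tfrac{\varphi_1k_1^2}{2}.
\]
Treat $W_1$ as a function of $(q_1,k_1,q_2)$, all of which move with $s_1$ through the Stage~2 equilibrium.

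\begin{itemize}
\item Since $W_1$ and $\pi_1$ differ only by a term independent of $q_1$, the firm's condition \eqref{eq:foc_q1} gives $\partial W_1/\partial q_1=0$.
\item Using \eqref{eq:foc_k1}, $\partial W_1/\partial k_1=q_1-\varphi_1k_1=-s_1\varphi_1k_1$.
\item From \eqref{eq:demand1}, $\partial W_1/\partial q_2=-(1-t)mq_1$.
\end{itemize}

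The government's first-order condition is therefore
\[
s_1\varphi_1k_1\,\frac{dk_1}{ds_1}=-(1-t)mq_1\,\frac{dq_2}{ds_1}.
\]

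\textbf{Step 2: express $dq_2$ and $dk_1$ in terms of $dq_1$.} Substitute $k_2=q_2/[(1-s_2)\varphi_2]$ into the home best response \eqref{eq:proc_sys2}. This gives
\[
2(1-b_2)D_2\,q_2=a-c_2-mq_1,\qquad D_2\equiv 1-\frac{1}{2(1-b_2)(1-s_2)\varphi_2},
\]
so $dq_2=-\frac{m}{2(1-b_2)D_2}\,dq_1$. Here $D_2>0$ by the second-order condition noted in the footnote to \eqref{eq:Delta_k}.

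Next, totally differentiate the foreign output condition \eqref{eq:foc_q1} with $r_1=0$. This gives $2(1-b_1)(1-t)\,dq_1=dk_1-(1-t)m\,dq_2$, and hence $dk_1=(1-t)\big[2(1-b_1)-X\big]dq_1$, where $X\equiv m^2/[2(1-b_2)D_2]$.

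\textbf{Step 3: solve for $s_1^*$.} Insert these relations and $k_1=q_1/[(1-s_1)\varphi_1]$ into the first-order condition. The factors $q_1$, $(1-t)$ and $dq_1/ds_1$ cancel, leaving
\[
\frac{s_1}{1-s_1}\big[2(1-b_1)-X\big]=X,
\]
so that $s_1^*=X/[2(1-b_1)]$, which is exactly \eqref{eq:s1_star}. The cancellation of $dq_1/ds_1$ requires it to be nonzero, which holds by Proposition~\ref{prop:policy_process}(ii).

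\textbf{Step 4: the three properties.} Each is then immediate.
\begin{enumerate}
\item[(i)] The numerator of \eqref{eq:s1_star} is positive for $m\neq0$ and the denominator $D_2>0$, so $s_1^*>0$.
\item[(ii)] The factor $(1-t)$ cancelled in Step~3, so $t$ does not appear in $s_1^*$.
\item[(iii)] $D_2$ is strictly decreasing in $s_2$, since $1/(1-s_2)$ is increasing in $s_2$. Hence $s_1^*$ is strictly increasing in $s_2$ for $m\neq0$.
\end{enumerate}

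\textbf{Main obstacle.} The delicate part is Step~2. One must use the home firm's full best response with $k_2$ substituted in, rather than holding $k_2$ fixed; otherwise the factor $D_2$ would not appear. A secondary issue is that the first-order condition characterises a maximum only if $W_1$ is concave in $s_1$ along the equilibrium path. I would argue this from the sign change of $dW_1/ds_1$ around $s_1^*$ under the maintained assumptions $\Delta^k>0$ and $b_i\in(0,1)$, and would not attempt a full global concavity proof.
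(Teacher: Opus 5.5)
Your proposal is correct and follows essentially the same route as the paper (Appendix A.5). You use the same envelope-type first-order condition $s_1\varphi_1k_1\,dk_1/ds_1=-(1-t)mq_1\,dq_2/ds_1$ and then substitute the Stage~2 comparative statics. The differences are only in the algebra: you obtain $dq_2/ds_1$ and $dk_1/ds_1$ as multiples of $dq_1/ds_1$ directly from the best responses instead of using the Cramer's-rule expressions of A.3 (where $\Delta^k$ cancels anyway), and you get (iii) from the monotonicity of $D_2$ rather than from the paper's explicit derivative.
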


\begin{proof}
See Appendix~\ref{app:derivations_process}.
\end{proof}

The independence of $s_1^*$ from $t$ arises because the foreign subsidy operates through the R\&D channel, which affects the foreign firm's cost structure, while the home tax affects revenue. The strategic complementarity ($ds_1^*/ds_2 > 0$) implies that unilateral subsidies provoke retaliatory subsidies. This is consistent with the observation of the escalating competition in semiconductors and EVs.

\begin{proposition}[Home subsidy best response under process R\&D and independent goods]\label{prop:home_policy_process}
When goods are independent ($m = 0$), the home government's best response in the process R\&D subsidy is:
\begin{equation}\label{eq:s2_star}
    s_2^* = \frac{1}{1 + 2(1-b_2)},
\end{equation}
In particular, the optimal home subsidy is strictly positive, is independent of the foreign subsidy, and is increasing in the network externality parameter $b_2$. Stronger network externalities call for higher subsidies.
\end{proposition}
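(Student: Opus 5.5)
With $m=0$ the two markets decouple, so the plan is to reduce the home government's problem to a one-variable problem in $s_2$ and obtain $s_2^*$ from the first-order condition using an envelope-type simplification. From \eqref{eq:q1_proc}--\eqref{eq:q2_proc} with $m=0$, $q_1$ and $k_1$ depend only on $(t,s_1)$, and $q_2,k_2$ depend only on $s_2$. Hence every term of $W_2$ involving good~1 (the utility terms in $q_1$, the outlay $p_1q_1$, and the revenue $tp_1q_1$) does not depend on $s_2$. This already gives independence from $s_1$.

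\textbf{Step 1: reduce $W_2$ to the good-2 surplus.} Substitute the budget constraint \eqref{eq:budget}, with $T$ equal to net tax revenue, into $W_2$. The transfer and profit terms then cancel against expenditure, leaving the good-2 part as total surplus:
\[
S(q_2,k_2)=(a+b_2q_2)q_2-\tfrac12 q_2^2-(c_2-k_2)q_2-\tfrac12\varphi_2k_2^2 .
\]
Here $A_2=a+b_2q_2$ is evaluated at fulfilled expectations, $Q_2=q_2$. The subsidy $s_2$ drops out because it is a pure transfer; it matters only through $(q_2,k_2)$.

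\textbf{Step 2: partial derivatives of $S$ along the firm's response.} Write $x\equiv 1/[(1-s_2)\varphi_2]$ and $\alpha\equiv 2(1-b_2)$. The firm's conditions \eqref{eq:proc_sys2} give $k_2=xq_2$ and $a-c_2+k_2=\alpha q_2$, so $q_2=(a-c_2)/(\alpha-x)$. Using the firm's first-order condition,
\[
\partial S/\partial q_2=a+2b_2q_2-q_2-c_2+k_2=\alpha q_2+(2b_2-1)q_2=q_2 ,
\]
which is the wedge created by market power plus the externality. Directly, $\partial S/\partial k_2=q_2-\varphi_2k_2=q_2(1-\varphi_2x)$.

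\textbf{Step 3: the first-order condition.} Differentiating the reduced forms gives $dq_2=q_2\,dx/(\alpha-x)$ and $dk_2=x\,dq_2+q_2\,dx=\alpha q_2\,dx/(\alpha-x)$. Setting $dS/dx=0$ yields
\[
\frac{q_2^2}{\alpha-x}\bigl[1+\alpha(1-\varphi_2x)\bigr]=0 .
\]
Hence $\varphi_2x=(1+\alpha)/\alpha$, that is $1-s_2=\alpha/(1+\alpha)$. This gives $s_2^*=1/(1+2(1-b_2))$, which is strictly positive and increasing in $b_2$, and is independent of $\varphi_2$ and of the foreign subsidy.

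\textbf{Main obstacle.} The delicate step is Step 1. It requires fixing how the network term enters $U$ (the level $A_2q_2$ with $A_2=a+b_2q_2$, so that the marginal social value is $a+2b_2q_2-q_2$) and verifying that the budget and transfer bookkeeping leaves exactly $S$. I will check that this convention is consistent with the welfare expression in \eqref{eq:W2}.

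To close the argument, I will verify that this stationary point is a maximum and lies in the admissible region $\alpha-x>0$. Substituting gives $\alpha-x=\alpha-(1+\alpha)/(\alpha\varphi_2)$, which is positive under the maintained second-order condition. The bracket in Step 3 is decreasing in $x$, so $S$ is single-peaked in $x$ on that region, and the stationary point is the maximum.
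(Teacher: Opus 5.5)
Your proposal is correct and is essentially the paper's argument (Appendix~A.6): for $m=0$ the paper's first-order condition is $q_2\,dq_2 - \frac{s_2q_2}{1-s_2}\,dk_2 = 0$ with $dk_2 = 2(1-b_2)\,dq_2$, and this is your $dS/dx=0$ term by term, because $q_2(1-\varphi_2x) = -s_2q_2/(1-s_2)$. Your $S$ is the objective that condition actually differentiates, namely $W_2$ read as the consumer's utility $U$ with $q_0$ taken from \eqref{eq:budget} (substituting the budget into \eqref{eq:W2} as literally written would count $\pi_2$ and $T$ twice and give $1/(1+4(1-b_2))$ instead), and your single-peakedness and admissibility check adds the second-order verification the paper omits.
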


\begin{proof}
See Appendix~\ref{app:derivations_process}.
\end{proof}

For $m > 0$, the system of the first-order conditions for an interior solutions is non-linear and cannot be solved in closed form. The numerical exercises in Section~\ref{sec:simulations} confirm that, for the admissible range of parameters in an interior equilibrium, the home government optimally sets both a positive subsidy $s_2 > 0$ and a positive tax $t > 0$ when goods are substitutes ($m > 0$). 

The positive subsidy at $m \geq 0$ reflects the consumer-surplus motive. Subsidising home-firm R\&D lowers production costs, which benefits home consumers through lower prices and higher utility. The fact that $s_2^*$ is increasing in $b_2$ means the case for subsidising domestic R\&D is stronger in sectors with more powerful network externalities.
 
\subsection{Product R\&D}\label{sec:productRD}

When firms invest in quality-enhancing (product) R\&D rather than cost-reducing (process) R\&D, the equilibrium structure changes in an important way. Firstly, the R\&D directly amplifies the network effect by expanding the individual demand function. Second, the tax rate enters the foreign firm's optimal innovation choice directly, making it a more potent policy instrument. 
Setting $k_1 = k_2 = 0$ gives for the Stage~2 equilibrium 
\begin{align}
    q_1 &= \frac{1}{2(1-b_1)}\left[a + r_1 - mq_2 - \frac{c_1}{1-t}\right], &
    r_1 &= \frac{(1-t)\,q_1}{(1-\sigma_1)\theta_1}, \label{eq:prod_sys1}\\[4pt]
    q_2 &= \frac{1}{2(1-b_2)}\left[a + r_2 - mq_1 - c_2\right], &
    r_2 &= \frac{q_2}{(1-\sigma_2)\theta_2}. \label{eq:prod_sys2}
\end{align}
The key difference is visible in the foreign firm's R\&D choice~\eqref{eq:prod_sys1}: the factor $(1-t)$ multiplies $q_1$ in $r_1$, so the home tax reduces the marginal revenue from quality improvement directly. Under process R\&D, the tax discourages foreign innovation only indirectly, by shrinking output and thereby lowering the return to cost reduction. Under product R\&D, the tax simultaneously reduces the foreign firm's output and its incentive to invest in quality per unit of output. This is why a smaller tax achieves a comparable strategic effect, and why the welfare implications of the two R\&D specifications differ in the numerical analysis of Section~\ref{sec:simulations}.

Substituting the R\&D conditions back into the output equations and solving yields the equilibrium quantities:
\begin{align}
    q_1 &= \frac{\displaystyle\frac{1}{2(1-b_1)}\!\left[a - \frac{c_1}{1-t}\right]\!\left[1 - \frac{1}{2(1-b_2)(1-\sigma_2)\theta_2}\right] - \frac{m(a-c_2)}{2(1-b_2)}}{\Delta^r}, \label{eq:q1_prod}\\[8pt]
    q_2 &= \frac{\displaystyle\frac{1}{2(1-b_2)}\left[a - c_2\right]\!\left[1 - \frac{1-t}{2(1-b_1)(1-\sigma_1)\theta_1}\right] - \frac{m}{2(1-b_1)}\!\left[a - \frac{c_1}{1-t}\right]}{\Delta^r}, \label{eq:q2_prod}
\end{align}
where
\begin{equation}\label{eq:Delta_r}
    \Delta^r \;\equiv\; \left[1 - \frac{1-t}{2(1-b_1)(1-\sigma_1)\theta_1}\right]\!\left[1 - \frac{1}{2(1-b_2)(1-\sigma_2)\theta_2}\right] - \frac{m^2}{4(1-b_1)(1-b_2)}.
\end{equation}
An interior equilibrium requires $\Delta^r > 0$, analogous to the condition on $\Delta^k$ in Section~\ref{sec:processRD}. The comparative statics mirror the process-R\&D case.

\begin{proposition}[Policy effects on output and product R\&D]\label{prop:policy_product}
Under the maintained assumptions ($\Delta^r > 0$, $b_i \in (0,1)$, $\sigma_i \in (0,1)$):
\begin{enumerate}
    \item[\textup{(i)}] The home tax reduces foreign output and product R\&D: $\partial q_1 / \partial t < 0$, $\partial r_1 / \partial t < 0$.
    \item[\textup{(ii)}] Each firm's own-government subsidy increases its output and product R\&D.
    \item[\textup{(iii)}] When the goods are substitutes ($m > 0$), the rival's subsidy reduces own output and R\&D; the home tax increases home output and R\&D.
    \item[\textup{(iv)}] When the goods are independent ($m = 0$), cross-country policy effects vanish.
\end{enumerate}
\end{proposition}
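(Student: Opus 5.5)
The plan is to work with the two best responses in \eqref{eq:prod_sys1}--\eqref{eq:prod_sys2} as a $2\times 2$ linear system and differentiate it implicitly, instead of differentiating the closed forms \eqref{eq:q1_prod}--\eqref{eq:q2_prod} directly. Substituting the R\&D rules into the output equations gives
\begin{align*}
D_1 q_1 + m q_2 &= E_1, & D_1 &\equiv 2(1-b_1) - \frac{1-t}{(1-\sigma_1)\theta_1}, & E_1 &\equiv a - \frac{c_1}{1-t},\\
m q_1 + D_2 q_2 &= E_2, & D_2 &\equiv 2(1-b_2) - \frac{1}{(1-\sigma_2)\theta_2}, & E_2 &\equiv a - c_2 .
\end{align*}
Here $D_1$ and $D_2$ are the product-R\&D versions of $\Gamma_1$ and $\Gamma_2$ in \eqref{eq:Gamma1}--\eqref{eq:Gamma2}. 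The determinant is $D \equiv D_1D_2 - m^2 = 4(1-b_1)(1-b_2)\Delta^r$, which is positive by assumption.

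I will use the second-order conditions, which are the analogue of the footnote in Section~\ref{sec:processRD}: they give $D_1>0$ and $D_2>0$. I will also use interiority, which gives $q_1>0$ and $q_2>0$. I would state explicitly that these are part of the maintained assumptions. Securing this sign information is the only real obstacle, since $\Delta^r>0$ on its own does not pin down the signs of $D_1$ and $D_2$.

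For any policy parameter $x \in \{t,\sigma_1,\sigma_2\}$, differentiating the system gives
\begin{equation*}
D_1 \frac{\partial q_1}{\partial x} + m \frac{\partial q_2}{\partial x} = R_1, \qquad m \frac{\partial q_1}{\partial x} + D_2 \frac{\partial q_2}{\partial x} = R_2,
\end{equation*}
where $R_i \equiv \partial E_i/\partial x - q_i\,\partial D_i/\partial x$. Cramer's rule then yields
\begin{equation*}
\frac{\partial q_1}{\partial x} = \frac{D_2R_1 - mR_2}{D}, \qquad \frac{\partial q_2}{\partial x} = \frac{D_1R_2 - mR_1}{D}.
\end{equation*}
Every parameter in the proposition moves only one equation, so one of $R_1, R_2$ is always zero. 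Each own effect therefore has the sign of the nonzero $R_i$, because $D_i/D>0$. Each cross effect has the opposite sign when $m>0$ and is exactly zero when $m=0$; the latter is part (iv).

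The individual cases are as follows.
\begin{itemize}
\item \emph{Home tax $t$.} Here $\partial E_1/\partial t = -c_1/(1-t)^2<0$ and $\partial D_1/\partial t = 1/[(1-\sigma_1)\theta_1]>0$. Hence $R_1<0$, which gives $\partial q_1/\partial t<0$ and, for $m>0$, $\partial q_2/\partial t>0$.
\item \emph{Foreign subsidy $\sigma_1$.} Here $\partial D_1/\partial\sigma_1 = -(1-t)/[(1-\sigma_1)^2\theta_1]<0$. Hence $R_1>0$, so $q_1$ rises and $q_2$ falls when $m>0$.
\item \emph{Home subsidy $\sigma_2$.} This case is symmetric to $\sigma_1$, with $R_2>0$.
\end{itemize}

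The R\&D statements then follow from the rules $r_1 = (1-t)q_1/[(1-\sigma_1)\theta_1]$ and $r_2 = q_2/[(1-\sigma_2)\theta_2]$.
\begin{itemize}
\item For $t$: $r_1$ is the product of two positive factors that both decrease in $t$, so $\partial r_1/\partial t<0$. Also, $r_2$ is proportional to $q_2$ with a coefficient independent of $t$, so $\partial r_2/\partial t>0$ for $m>0$.
\item For the own subsidy: $r_i$ is the product of $q_i$ and $1/(1-\sigma_i)$, both positive and increasing in $\sigma_i$, so $r_i$ increases in $\sigma_i$.
\item For the rival's subsidy: $r_i$ is proportional to $q_i$ with a coefficient that does not depend on $\sigma_j$, so it inherits the sign of $\partial q_i/\partial\sigma_j$. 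It falls for $m>0$ and is unchanged for $m=0$, which completes (ii)--(iv).
\end{itemize}
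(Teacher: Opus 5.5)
Your proof is correct and follows essentially the paper's route. Appendix~\ref{app:derivations_product} totally differentiates the four-equation system in $(q_1,r_1,q_2,r_2)$ and applies Cramer's rule with determinant $\Delta^r$. That is your computation before eliminating $r_1,r_2$: your $D=4(1-b_1)(1-b_2)\Delta^r$ and your $\partial q_i/\partial x$ reproduce its formulas exactly, and the paper likewise reads the R\&D effects off the rules for $r_i$.

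You are right to state the second-order conditions $D_i>0$ and $q_i>0$ as explicit assumptions. The paper's signs rest on exactly these facts (the bracketed factors and $q_1,q_2$ in its expressions), and, contrary to its claim in B.1, they do not follow from $b_i,\sigma_i\in(0,1)$ alone.
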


\begin{proof}
See Appendix~\ref{app:derivations_product}.
\end{proof}

\begin{proposition}[Network externality effects under product R\&D]\label{prop:network_product}
A stronger own-network externality increases own output and product R\&D, while a stronger rival's network externality reduces own output and R\&D when goods are substitutes ($m > 0$).
\end{proposition}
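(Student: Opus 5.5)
Write the Stage~2 equilibrium in reduced form and differentiate it. Substituting $r_1=(1-t)q_1/[(1-\sigma_1)\theta_1]$ and $r_2=q_2/[(1-\sigma_2)\theta_2]$ into the output conditions \eqref{eq:prod_sys1}--\eqref{eq:prod_sys2} gives the linear system
\[
\Gamma_1 q_1 + m q_2 = \alpha_1,\qquad m q_1 + \Gamma_2 q_2 = \alpha_2 .
\]
Here $\Gamma_1 = 2(1-b_1) - \frac{1-t}{(1-\sigma_1)\theta_1}$, $\Gamma_2 = 2(1-b_2)-\frac{1}{(1-\sigma_2)\theta_2}$, $\alpha_1 = a - c_1/(1-t)$ and $\alpha_2 = a-c_2$. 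These are the product-R\&D specialisations of \eqref{eq:Gamma1}--\eqref{eq:Gamma2}. Their determinant is $D=\Gamma_1\Gamma_2-m^2 = 4(1-b_1)(1-b_2)\Delta^r>0$, and the second-order conditions give $\Gamma_i>0$. Cramer's rule then yields
\[
q_1=(\Gamma_2\alpha_1-m\alpha_2)/D,\qquad q_2=(\Gamma_1\alpha_2-m\alpha_1)/D,
\]
which agree with \eqref{eq:q1_prod}--\eqref{eq:q2_prod}. Interiority means both numerators are positive.

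The key observation is that $b_i$ enters only through $\Gamma_i$, with $\partial\Gamma_i/\partial b_i=-2<0$, and appears nowhere else. It therefore suffices to sign $\partial q_1/\partial\Gamma_1$ and $\partial q_2/\partial\Gamma_1$; the case of $\Gamma_2$ follows by symmetry. Direct differentiation gives
\[
\frac{\partial q_1}{\partial \Gamma_1}=\frac{-\Gamma_2\,(\Gamma_2\alpha_1-m\alpha_2)}{D^2}=-\frac{\Gamma_2 q_1}{D}<0,\qquad
\frac{\partial q_2}{\partial \Gamma_1}=\frac{\alpha_2 D-\Gamma_2(\Gamma_1\alpha_2-m\alpha_1)}{D^2}=\frac{m\,(\Gamma_2\alpha_1-m\alpha_2)}{D^2}=\frac{m\,q_1}{D}.
\]
The second expression simplifies because $\alpha_2 D - \Gamma_1\Gamma_2\alpha_2 = -m^2\alpha_2$. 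Combining with $\partial\Gamma_1/\partial b_1=-2$:
\[
\partial q_1/\partial b_1 = 2\Gamma_2 q_1/D>0,\qquad \partial q_2/\partial b_1=-2mq_1/D,
\]
and the latter is negative exactly when $m>0$ (zero when $m=0$). Equivalently, I could write the system as $J\,dq=-d\Gamma\,q$, invert $J$, and use $q_i>0$. The symmetric computation gives $\partial q_2/\partial b_2=2\Gamma_1q_2/D>0$ and $\partial q_1/\partial b_2=-2mq_2/D$.

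The statements about R\&D follow from the linear rules $r_1=(1-t)q_1/[(1-\sigma_1)\theta_1]$ and $r_2=q_2/[(1-\sigma_2)\theta_2]$. Their coefficients are positive and do not depend on $b_1$ or $b_2$, so $\partial r_i/\partial b_j$ has the same sign as $\partial q_i/\partial b_j$.

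The main obstacle is bookkeeping rather than substance. If I differentiated the long closed forms \eqref{eq:q1_prod}--\eqref{eq:q2_prod} directly, $b_i$ would appear in several places: the numerator prefactors, the bracketed terms and $\Delta^r$. The signs would then be opaque. Working with the $\Gamma$-form, where $b_i$ appears exactly once, avoids this, and the signs rest only on $\Gamma_i>0$, $D>0$ and $q_i>0$. I would also remark that the argument is identical to the one behind Proposition~\ref{prop:network_process}, with $\varphi_i$ and $s_i$ replaced by $\theta_i$ and $\sigma_i$, and with the factor $(1-t)$ moved onto the R\&D term.
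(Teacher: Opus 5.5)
Your proof is correct and is essentially the paper's argument in Appendix B.3: the paper totally differentiates the linear equilibrium system in $(q_1,r_1,q_2,r_2)$ and signs the Cramer's-rule expressions using $\Delta^r>0$, the second-order brackets (equivalently $\Gamma_i>0$) and $a+r_1-c_1/(1-t)-mq_2=2(1-b_1)q_1>0$, then reads off the R\&D effects from the linear rules $r_i(q_i)$; your elimination of $r_i$ to a $2\times 2$ $\Gamma$-system is just tidier packaging, and your derivatives coincide with the paper's (e.g.\ $2\Gamma_2 q_1/D = q_1\bigl[1-\tfrac{1}{2(1-b_2)(1-\sigma_2)\theta_2}\bigr]/[(1-b_1)\Delta^r]$). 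One small correction: your Cramer formulas match \eqref{eq:q1_general}--\eqref{eq:q2_general}, but not the printed \eqref{eq:q1_prod}--\eqref{eq:q2_prod}, whose cross terms are missing a factor $1/[2(1-b_1)]$ (resp.\ $1/[2(1-b_2)]$) --- this is a typo in the paper, not an error in your derivation.
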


\begin{proof}
See Appendix~\ref{app:derivations_product}.
\end{proof}

\begin{proposition}[Foreign subsidy best response under product R\&D]\label{prop:foreign_subsidy_product}
The foreign government's best response in the product R\&D subsidy is:
\begin{equation}\label{eq:sigma1_star}
    \sigma_1^* = \frac{\dfrac{m^2}{4(1-b_1)(1-b_2)}}{1 - \dfrac{1}{2(1-b_2)(1-\sigma_2)\theta_2}}.
\end{equation}
Specifically, (i) $\sigma_1^* > 0$ when $m \neq 0$; (ii) $\sigma_1^*$ is independent of $t$; and (iii) subsidies are strategic complements.
\end{proposition}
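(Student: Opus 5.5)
The plan mirrors the proof of Proposition~\ref{prop:foreign_subsidy_process}. We maximise $W_1$ in \eqref{eq:W1} over $\sigma_1$, taking the Stage~2 equilibrium \eqref{eq:prod_sys1}--\eqref{eq:prod_sys2} as given, and use an envelope argument to keep the algebra manageable. With $k_i=0$, $W_1=(1-t)p_1q_1-c_1q_1-\theta_1 r_1^2/2$ is a function of $(q_1,r_1,q_2)$. By the firm's output first-order condition \eqref{eq:foc_q1}, $\partial W_1/\partial q_1=0$, since $W_1$ and $\pi_1$ differ only in the $r_1$-cost term. By \eqref{eq:foc_r1}, $\partial W_1/\partial r_1=(1-t)q_1-\theta_1r_1=-\sigma_1\theta_1 r_1$. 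Finally, $\partial W_1/\partial q_2=-(1-t)m q_1$. The government's first-order condition is therefore
\begin{equation*}
-\sigma_1\theta_1 r_1\,\frac{dr_1}{d\sigma_1}-(1-t)\,m\,q_1\,\frac{dq_2}{d\sigma_1}=0 .
\end{equation*}

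Next I compute the total derivatives from the reduced best responses of Section~\ref{sec:general_equilibrium}, specialised to product R\&D. Define $\Gamma_2=2(1-b_2)-1/[(1-\sigma_2)\theta_2]$ and $\Gamma_1=2(1-b_1)-g$, where $g\equiv(1-t)/[(1-\sigma_1)\theta_1]$.

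From \eqref{eq:general_q2}, $dq_2/d\sigma_1=-(m/\Gamma_2)\,x$, where $x\equiv dq_1/d\sigma_1$. Differentiating \eqref{eq:general_q1} and using $d\Gamma_1/d\sigma_1=-g/(1-\sigma_1)$ gives
\begin{equation*}
x=\frac{g\,q_1}{(1-\sigma_1)D},\qquad D\equiv\Gamma_1-\frac{m^2}{\Gamma_2}.
\end{equation*}
Here $D>0$ is equivalent to $\Delta^r>0$. From $r_1=g q_1$ we get $dr_1/d\sigma_1=g\,[\,q_1/(1-\sigma_1)+x\,]$.

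Substituting and dividing by the common factor $(1-t)q_1 g/[(1-\sigma_1)D]$, the first-order condition becomes
\begin{equation*}
\frac{\sigma_1(D+g)}{1-\sigma_1}=\frac{m^2}{\Gamma_2}.
\end{equation*}
Since $D+g=2(1-b_1)-m^2/\Gamma_2$, this is linear in $\sigma_1$ and collapses to $2(1-b_1)\sigma_1=m^2/\Gamma_2$. Writing $\Gamma_2=2(1-b_2)\bigl[1-1/(2(1-b_2)(1-\sigma_2)\theta_2)\bigr]$ gives exactly \eqref{eq:sigma1_star}. Notably, both $t$ and $\theta_1$ cancel. The main obstacle is this bookkeeping: one must check that the $(1-t)$ factors arising from the tax-distorted R\&D rule in \eqref{eq:prod_sys1} all factor out, and that $g$ cancels against the $-g$ inside $\Gamma_1$. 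I would organise the computation around $g$, $D$ and $\Gamma_2$ precisely so that these cancellations are transparent. I would also verify the second-order condition, or at least uniqueness of the stationary point on $(0,1)$, using the fact that the first-order condition is monotone in $\sigma_1$ once rearranged.

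The three properties then follow directly.
\begin{enumerate}
\item[(i)] For $m\neq0$ we have $m^2>0$, and the denominator $1-1/[2(1-b_2)(1-\sigma_2)\theta_2]>0$ is the home firm's second-order condition, implied by the maintained interiority assumptions. Hence $\sigma_1^*>0$.
\item[(ii)] $t$ does not appear in the expression, so $\sigma_1^*$ is independent of $t$.
\item[(iii)] The term $1/[(1-\sigma_2)\theta_2]$ increases in $\sigma_2$, so the denominator falls and $\sigma_1^*$ rises. Hence $d\sigma_1^*/d\sigma_2>0$ for $m\neq0$, i.e.\ subsidies are strategic complements.
\end{enumerate}
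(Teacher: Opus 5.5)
Your proposal is correct and follows the paper's route. You derive the same envelope first-order condition $-m(1-t)q_1\,dq_2/d\sigma_1-\sigma_1\theta_1 r_1\,dr_1/d\sigma_1=0$ as Appendix~B.4 and then substitute the comparative statics; your reduced $2\times 2$ computation in terms of $g$, $D$ and $\Gamma_2$ reproduces exactly the derivatives of Appendix~B.2 and makes the paper's ``substituting and simplifying'' step explicit (the common factor you divide out should carry $q_1^2$ rather than $q_1$, which is immaterial).
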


\begin{proof}
See Appendix~\ref{app:derivations_product}.
\end{proof}

\begin{proposition}[Home policy under product R\&D and independent goods]\label{prop:home_policy_product}
When $m = 0$, the home government's optimal product R\&D subsidy is:
\begin{equation}\label{eq:sigma2_star}
    \sigma_2^* = \frac{1}{1+2(1-b_2)},
\end{equation}
which is increasing in $b_2$.
\end{proposition}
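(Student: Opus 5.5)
The plan is to exploit the fact that at $m=0$ the home market for good~2 decouples from everything the foreign side does, so the home government's problem in $\sigma_2$ collapses to a one-variable problem in the home firm's equilibrium only. I mirror the argument for Proposition~\ref{prop:home_policy_process}.

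\textbf{Step 1 (decoupling).} At $m=0$, \eqref{eq:q2_prod} gives $q_2=(a-c_2)/\bigl(2(1-b_2)-x\bigr)$ with $r_2=x\,q_2$, where I reparametrise the policy by $x\equiv 1/[(1-\sigma_2)\theta_2]$. Neither $q_2$ nor $r_2$ depends on $(t,\sigma_1)$, and $q_1,r_1$ do not depend on $\sigma_2$. I then write $W_2$ using $U$, the budget constraint \eqref{eq:budget} and the profit \eqref{eq:profit2}. Prices paid on good~2 cancel against the firm's revenue, and the subsidy cancels against the firm's net R\&D cost. The terms in $t\,p_1q_1$ and the good-1 surplus are then independent of $\sigma_2$.

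\textbf{Step 2 (objective in $x$).} Imposing fulfilled expectations, $A_2=a+r_2+b_2q_2$, the $\sigma_2$-relevant part of home welfare is
\[
S(x)=(a-c_2)q_2+(x+b_2-\tfrac12)\,q_2^2-\tfrac{\theta_2}{2}x^2q_2^2 .
\]
A point to check carefully is whether the network term enters $U$ at its fulfilled-expectations value. I treat the government as internalising it, since that is what makes $b_2$ appear in $\sigma_2^*$.

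\textbf{Step 3 (first-order condition).} I differentiate $S$ in $x$ using $dq_2/dx=q_2/(2(1-b_2)-x)$. In the bracket multiplying $dq_2/dx$, I substitute $a-c_2=(2(1-b_2)-x)q_2$. The bracket then simplifies to $q_2(1+x-\theta_2x^2)$. The condition $dS/dx=0$ becomes
\[
\frac{1+x-\theta_2x^2}{2(1-b_2)-x}+1-\theta_2 x=0 .
\]
Clearing the denominator, the $x$ and $\theta_2x^2$ terms cancel, leaving $1+2(1-b_2)=2(1-b_2)\theta_2x$. Since $\theta_2x=1/(1-\sigma_2)$, this gives $1-\sigma_2=2(1-b_2)/\bigl(1+2(1-b_2)\bigr)$, which is exactly \eqref{eq:sigma2_star}.

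\textbf{Step 4 (remaining claims).} Monotonicity in $b_2$ is immediate because the denominator $1+2(1-b_2)$ falls as $b_2$ rises. I would also verify that this stationary point is a maximum. One route is to check that $dS/dx$ changes sign from positive to negative within the interior region $2(1-b_2)-x>0$ required by $\Delta^r>0$. After the cancellation, the numerator of the condition is linear in $x$ with negative slope $-2(1-b_2)\theta_2$, which settles it.

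I expect the main obstacle to be Step~2: correctly netting out prices, the profit and the subsidy transfer in $W_2$, and treating the externality consistently, rather than the algebra itself.
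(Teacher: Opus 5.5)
Your proof is correct and is the paper's argument (Appendix~B.5) made explicit: the paper takes the home-welfare first-order condition in $\sigma_2$, which at $m=0$ reduces to $q_2\,dq_2/d\sigma_2=\frac{\sigma_2}{1-\sigma_2}\,q_2\,dr_2/d\sigma_2$ with consumer surplus $\tfrac12 q_2^2$ evaluated at fulfilled expectations, and your direct differentiation of $S(x)$ yields exactly this condition. Your quasi-concavity check goes slightly beyond the paper, but the stationary point lies in the region $2(1-b_2)-x>0$ only if $1+2(1-b_2)<4(1-b_2)^2\theta_2$, which is a parameter restriction rather than something automatic.
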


\begin{proof}
See Appendix~\ref{app:derivations_product}.
\end{proof}

The qualitative results are analogous to the process-R\&D case. 

For arbitrary $m > 0$, the home government's optimal policy has no closed-form solution. Some numerical examples are shown in Section~\ref{sec:simulations}. At low values of $b$, the numerical exercises confirm that the home government combines a positive subsidy with a positive tax, as in the process-R\&D case. At high $b$, however, the optimal tax can turn negative under product R\&D, a finding that is discussed in Section~\ref{sec:sim_product}. The fact that Propositions~\ref{prop:policy_product}--\ref{prop:home_policy_product} are qualitatively similar to their process-R\&D counterparts means that any qualitative differences in welfare outcomes between the two specifications can be attributed to the direct tax channel on quality investment.

\section{Welfare Effects of Industrial Policy}\label{sec:simulations}

The analytical results in Section~\ref{sec:equilibrium} establish the qualitative properties of the Nash equilibrium under both process and product R\&D, but the home government's joint choice of the policy mix (the R\&D subsidy and the tax on the foreign rival) does not admit a closed-form solution when goods are substitutes ($m > 0$). This section uses illustrative numerical exercises to compare welfare under the Nash equilibrium with laissez-faire, varying the strength of network externalities $b$ as the main parameter. 
 
Three findings emerge. First, under process R\&D, the foreign country is always worse off under Nash policy competition than under laissez-faire, while the home country always gains. At sufficiently high $b$, the home country's gain outweighs the foreign country's loss, and aggregate welfare exceeds the laissez-faire level (Section~\ref{sec:sim_process}). Second, under product R\&D when goods are weak substitutes or complements, both countries are better off under Nash than under laissez-faire at high $b$. The mechanism is a sign reversal in the optimal tax, which turns negative and becomes an import subsidy, which together with the demand-side R\&D subsidies helps to partially internalise the network externality. When goods are relatively stronger substitutes, this win-win outcome vanishes because the profit loss from displacing the domestic firm's market share outweighs the consumer-surplus gain from the network externality (Section~\ref{sec:sim_product}).
 
The optimal policy instruments as functions of $b$ are reported in Supplementary Material, Section~\ref{sm:policy_instruments}.

\subsection{Parameterisation}\label{sec:calibration}

The parameter values are chosen to satisfy the interior-equilibrium and second-order conditions while delivering economically meaningful solutions. Throughout, the demand intercept is $a = 1$ and the R\&D efficiency parameters are set symmetrically at $\varphi_1 = \varphi_2 = \theta_1 = \theta_2 = 2.5$. Production costs are set to $c_1 = c_2 = 0.7$ and network externalities to $b_1 = b_2 = b$, where $b$ is varied from zero to the upper bound of the admissible region. The aim is to illustrate the patterns, rather than to produce calibrated empirical predictions.  
 
Each pattern is illustrated for two values of the substitutability parameter, $m = 0.05$ (weakly substitutable goods) and $m = 0.25$ (moderately substitutable goods), to show how the welfare effects of network externalities depend on the degree of product differentiation. Network-goods sectors span a wide range of substitutability. A foreign AI platform and a domestic alternative may have very different ecosystems and developer communities, making them weak substitutes ($m$ small), while competing chip architectures for the same application are closer substitutes ($m$ moderate). We are interested in an interior equilibrium, with all equilibrium quantities, prices, and R\&D outlays strictly positive. This determines the admissible range of $b$. Under the process-R\&D specification this restricts $b$ to approximately $[0, 0.46]$; under product R\&D the admissible range extends to approximately $[0, 0.50]$. 

\begin{table}[ht]
\centering
\caption{Parameter values used in the numerical exercises}
\label{tab:calibration}
\renewcommand{\arraystretch}{1.2}
\begin{tabular}{@{}lll@{}}
\toprule
\textbf{Parameter} & \textbf{Value} & \textbf{Interpretation} \\
\midrule
$a$ & 1 & Demand intercept \\
$c_1 = c_2$ & 0.7 & Marginal cost (symmetric) \\
$\varphi_1 = \varphi_2$ & 2.5 & Process R\&D cost parameter \\
$\theta_1 = \theta_2$ & 2.5 & Product R\&D cost parameter \\
$b_1 = b_2 = b$ & $[0,\, \bar{b}]$ & Network externality (swept) \\
$m$ & $\{0.05,\, 0.25\}$ & Substitutability \\
\bottomrule
\end{tabular}
\end{table}

\subsection{Process R\&D}\label{sec:sim_process}

Figure~\ref{fig:welfare_diff_process} shows the welfare differences between the industrial policy equilibrium (``Nash'') and laissez-faire (``LF'') as functions of the common network externality strength, $b$, for weak and moderate substitutability between home and imported good. Country 1 is the exporter (foreign country) and Country 2 is the importer (home country).

\begin{figure}[ht]
    \centering
    \includegraphics[width=\textwidth]{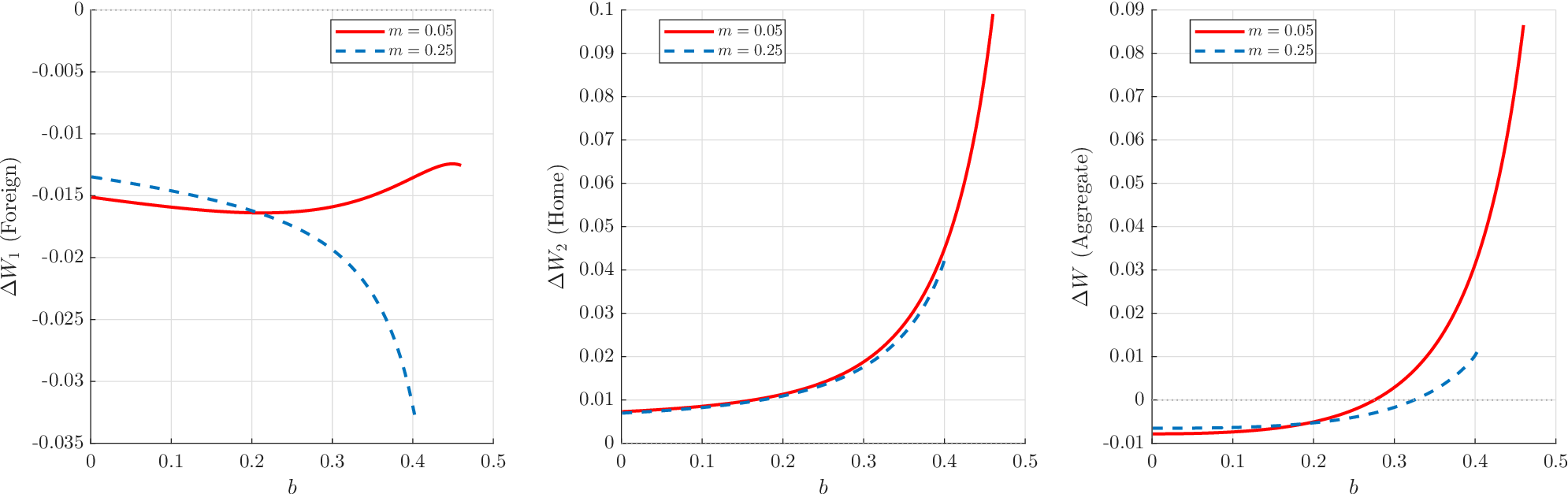}
    \caption{Process R\&D: welfare differences ($\Delta W_i \equiv W_i^{\text{Nash}} - W_i^{\text{LF}}$) between the industrial policy equilibrium (``Nash'') and laissez-faire (``LF'') as functions of network externality strength $b$, for $m = 0.05$ (solid) and $m = 0.25$ (dashed). Left: foreign country always loses. Centre: home country always gains. Right: aggregate welfare under policy competition exceeds laissez-faire at high $b$, more strongly at lower substitutability.}
    \label{fig:welfare_diff_process}
\end{figure}

Several clear patterns emerge. First, the foreign country is always worse off under the industrial policy competition than under laissez-faire ($\Delta W_1 < 0$ throughout). At $m = 0.05$ the loss is relatively mild and flattens at high $b$. At $m = 0.25$ it deepens substantially as $b$ increases. Second, the home country always gains ($\Delta W_2 > 0$ throughout). The gain is modest at low $b$ and rises steeply as network effects strengthen, reaching considerably higher levels at lower substitutability. Third, the aggregate welfare difference $\Delta W$ changes sign. At low $b$, aggregate welfare under intervention is lower than under laissez-faire, so that industrial policy is self-defeating in aggregate. As $b$ rises, the home country's gain grows faster than the foreign country's loss, and $\Delta W$ turns positive. This zero-crossing occurs at a substantially higher value of $b$ under stronger substitutability ($m = 0.25$), reflecting the greater rent-dissipation cost that network effects must overcome. At $m = 0.05$, the aggregate welfare gain at high $b$ is large. At $m = 0.25$ it barely turns positive before the admissible range ends.

The mechanism is the network externality in demand interacting with the product substitutability. The home government's R\&D subsidy raises domestic output, which strengthens the domestic firm's network position, raising further the demand. When the network externality is weak (low $b$), the distortionary cost of the tax and subsidy dominates. Since subsidies are strategic complements (Propositions~\ref{prop:foreign_subsidy_process} and~\ref{prop:foreign_subsidy_product}), each government's intervention provokes a retaliatory response, leading to a race to the bottom that dissipates surplus faster than network externalities can generate it. In contrast, when the externality is strong enough (high $b$), so that the subsidy effectively internalises part of the positive externality undersupplied by the market, the import tax reinforces this by shifting market share toward the domestic firm. However, at the same time, the tax always harms the foreign country. The aggregate gain at high $b$ arises because the welfare gain from subsidising the network externality on the demand side is large enough to offset the foreign country's loss.

Stronger substitutability ($m = 0.25$) shifts the $\Delta W$ curve downward relative to $m = 0.05$, so that a substantially larger network externality is needed to make the industrial policy equilibrium welfare-improving. When goods are closer substitutes, competitive interaction is more intense, the subsidy generates more business-stealing, and a stronger network externality is needed to overcome the associated welfare cost.

Under the process R\&D, in our model the industrial policy equilibrium is always redistributive in the interior equilibrium. The home country gains at the foreign country's expense. Whether the two countries collectively gain depends on the strength of network externalities. Additional investigation (see figures in Supplementary Material Section~\ref{sm:policy_instruments}) confirms that the optimal tax generally declines in $b$ at low substitutability, while both subsidies rise, consistent with the model prediction that the case for subsidising domestic R\&D is stronger in sectors with more powerful network externalities (Proposition~\ref{prop:home_policy_process}). At higher substitutability, the optimal tax declines over most of the range and rises close to the upper boundary of the admissible region, reflecting the stronger business-stealing motive that sustains the tax when competitive interaction is intense.

The process-R\&D results therefore suggest that in sectors where innovation is primarily cost-reducing, non-cooperative industrial policy is redistributive. The importing country benefits at the expense of the exporting country, and the aggregate gain from policy intervention depends on the strength of network externalities. When network externalities are weak, the distortionary cost of taxes and subsidies dominates, and the industrial policy reduces the aggregate welfare. When network externalities are strong, the innovation subsidy partially internalises the positive externality in demand, and industrial policy leads to the aggregate welfare gain.

\subsection{Product R\&D}\label{sec:sim_product}

Figure~\ref{fig:welfare_diff_product} reports the same welfare comparison for the product-R\&D specification.

\begin{figure}[ht]
    \centering
    \includegraphics[width=\textwidth]{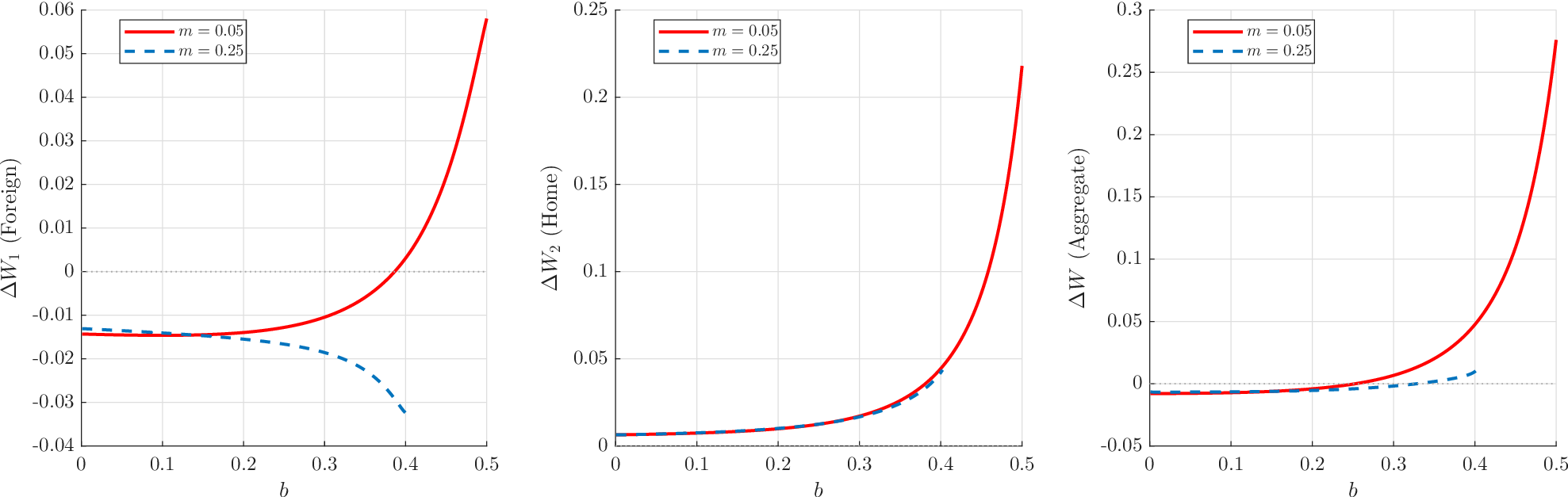}
    \caption{Product R\&D: welfare differences ($\Delta W_i \equiv W_i^{\text{Nash}} - W_i^{\text{LF}}$) between the industrial policy equilibrium (``Nash'') and laissez-faire (``LF'')  as functions of network externality strength $b$, for $m = 0.05$ (solid) and $m = 0.25$ (dashed). Left: at $m = 0.05$, the foreign country switches from losing to gaining at high $b$; at $m = 0.25$, the foreign country always loses. Centre: home country gains throughout. Right: aggregate welfare gain is substantially higher at lower and higher $b$.}
    \label{fig:welfare_diff_product}
\end{figure}

At $m = 0.25$, the product-R\&D welfare differences are qualitatively similar to the process-R\&D case. The foreign country loses throughout, the home country gains, and the aggregate welfare difference turns positive only near the upper boundary of the admissible range. Stronger substitutability makes industrial policy redistributive under both R\&D specifications in our model.

At $m = 0.05$, the pattern changes. The exporting (foreign) country's loss shrinks as $b$ rises, eventually crosses zero and turns into gain. Above the threshold, the exporter is better off under industrial policy competition than under laissez-faire. Since the importing (home) country also gains over this range, both countries simultaneously benefit from the industrial policy competition compared to no intervention. The aggregate welfare difference $\Delta W$ turns positive at a lower value of $b$ than $\Delta W_1$ and rises steeply thereafter.

The win-win result is driven by a sign reversal in the optimal tax. As $b$ rises, $t^*$ declines and turns negative at approximately the same value of $b$ at which $\Delta W_1$ changes sign (Supplementary Material, Section~\ref{sm:policy_instruments}). The home government chooses to subsidise rather than tax the foreign good because, in the presence of strong network effects, the resulting quality improvement triggers a self-reinforcing expansion in demand. This feedback loop raises the aggregate willingness to pay and consumer surplus by more than the associated fiscal cost, effectively using both the trade policy and the innovation policy in internalising the positive consumption externality.

This sign reversal in our model is specific to the product-R\&D specification and is absent from the process-R\&D case when the goods are substitutes. As discussed in Section~\ref{sec:productRD}, the tax enters the foreign firm's product-R\&D choice condition directly (equation~\ref{eq:prod_sys1}), so a negative tax raises the foreign firm's return to quality investment, which in turn raises consumers' willingness to pay and triggers the network effect directly. Under process R\&D, the tax affects only the output channel, and the indirect demand network effect is too weak to justify the fiscal cost of an import subsidy. The numerical exercises confirm that the sign reversal does not arise under process R\&D in our specification.

The link between the product subsitutability and the win-win result reflects the interaction between network externalities and the business-stealing effect. When goods are close substitutes, an import subsidy that expands foreign output would displace domestic output, because consumers switch toward the now-cheaper foreign good. The profit loss to the home firm would outweigh the consumer-surplus gain from the network externality, and so an import subsidy would not be optimal from the home government viewpoint. When goods are weak substitutes, this business-stealing cost is small, and the network externality channel dominates. The win-win result therefore requires three conditions to hold simultaneously: strong network externalities ($b$ high), quality-enhancing innovation (product R\&D), and sufficient product differentiation ($m$ low).

Since business-stealing effect disappears when the demands are independent and turn into business-augmenting effect when the goods are complements (for example, EVs and charging infrastructure, or, in the intermediate-input interpretation of Remark~\ref{rem:isomorphism}, the litography machines and the ultra-precise mirrors used jointly in semiconductor chip fabrication), it is natural to expect that R\&D subsidy by each country will lead to the mutual market expansion, further boosting the network effect. As a result, the win-win outcome will be more likely for complements even at relatively low network strength. Indeed, the numerical exercises support this reasoning. When the goods are complements ($m < 0$), the win-win region expands and appears under both R\&D specifications, because subsidising the foreign firm raises rather than reduces demand for the domestic good. Under  this corresponds to sectors where the competing goods . Numerical examples are reported in Supplementary Material, Section~\ref{sm:complements}.

\section{Conclusion}\label{sec:conclusion}
 
This paper studies the welfare effects of industrial policy in sectors characterised by network externalities in demand using a two-country, two-good model. The central finding is that the welfare effects of the industrial policy that combines R\&D subsidies and a tax on import depend on the interaction between the strength of the network externality strength and market structure. When the home and foreign goods are moderate or strong substitutes, the home (importing) country gains at the expense of the foreign (exporting) country regardless of the degree of substitutability. The aggregate welfare effect of policy competition turns from negative to positive only when the network externalities are sufficiently strong. When the goods are weak substitutes or complements, the win-win outcome is possible. The network externalities create the possibility of the mutually beneficial industrial policy, but its realisation depends on the market structure.
 
The case for industrial policy, thus, depends on the strength of the network externality, the extent of substitutability or complementarity between the home and foreign goods, and on the nature of innovation. When the goods are substitutes, R\&D subsidies contribute to business-stealing effect: higher market share of one firm reduces the market share of another. Since the externality operates on the demand side of the market, the case is stronger when the innovation directly affects the demand by increasing willingness to pay: quality-enhancing, or product R\&D. When the innovation is on the supply side, i.e. cost-reducing, or process R\&D, the importing country benefits from taxing the foreign good, at the cost to the exporting country; the aggregate welfare also deteriorates unless the externalities are sufficiently strong.  

However, when the goods are complements, R\&D subsidies contribute to the market expansion for both firms and, thus, operate in the same direction as the network effects in terms of the welfare. As a result, the industrial policy helps internalise the externality and can lead to welfare gains for both countries. Here, again, the case is stronger for the product innovation; indeed, in our specification the win-win outcome can occur under product R\&D even for sufficiently weak substitutes, whereas for the process R\&D this only takes place under complementarity.

Ultimately, our findings suggest that the global return to industrial strategy need not be a zero-sum game. While the risk of a rent-dissipating race to the bottom is real, especially in process-heavy manufacturing, the transition toward a modern economy defined by product innovation and deep technical complementarities offers a different path. By identifying the thresholds where network effects help the policy create mutual benefits, this paper provides a roadmap for a more sophisticated industrial policy, where the expansion of global technological ecosystems dominates the narrow pursuit of national market share. In this light, the ``new'' industrial policy is not about picking winners, but is rather about fostering the interconnected networks that drive economic value in the contemporary global landscape.

\clearpage

%% file: appendix.tex
\section{Derivations for Process R\&D Equilibrium}\label{app:derivations_process}

This appendix provides the detailed derivations underlying the propositions in Section~\ref{sec:processRD}. Set $r_1 = r_2 = 0$ throughout.

\subsection*{A.1\quad Second-order conditions}

The second-order conditions for the foreign firm require:
\[
\frac{\partial^2 \pi_1}{\partial q_1^2} < 0, \qquad
\frac{\partial^2 \pi_1}{\partial k_1^2} < 0, \qquad
\Delta_1^{k*} \equiv
\begin{vmatrix}
\dfrac{\partial^2 \pi_1}{\partial q_1^2} & \dfrac{\partial^2 \pi_1}{\partial q_1 \partial k_1} \\[8pt]
\dfrac{\partial^2 \pi_1}{\partial q_1 \partial k_1} & \dfrac{\partial^2 \pi_1}{\partial k_1^2}
\end{vmatrix} > 0.
\]
Evaluating:
\[
\Delta_1^{k*} = 2(1-b_1)(1-t)(1-s_1)\varphi_1 \left[1 - \frac{1}{2(1-b_1)(1-t)(1-s_1)\varphi_1}\right].
\]
For this to be positive, we need:
\begin{equation}\label{eq:soc_k1}
1 - \frac{1}{2(1-b_1)(1-t)(1-s_1)\varphi_1} > 0.
\end{equation}
Similarly, for the home firm:
\begin{equation}\label{eq:soc_k2}
1 - \frac{1}{2(1-b_2)(1-s_2)\varphi_2} > 0.
\end{equation}
Both conditions are satisfied for $0 < b_i < 1$ and $0 < s_i < 1$.

\subsection*{A.2\quad Matrix-form equilibrium system}

Substituting $k_i$ into the $q_i$ equations, the system can be written as:
\[
\begin{pmatrix}
1 & -\dfrac{1}{2(1-b_1)(1-t)} & \dfrac{m}{2(1-b_1)} & 0 \\[6pt]
-\dfrac{1}{(1-s_1)\varphi_1} & 1 & 0 & 0 \\[6pt]
\dfrac{m}{2(1-b_2)} & 0 & 1 & -\dfrac{1}{2(1-b_2)} \\[6pt]
0 & 0 & -\dfrac{1}{(1-s_2)\varphi_2} & 1
\end{pmatrix}
\begin{pmatrix} q_1 \\ k_1 \\ q_2 \\ k_2 \end{pmatrix}
=
\begin{pmatrix}
\dfrac{1}{2(1-b_1)}\!\left[a - \dfrac{c_1}{1-t}\right] \\[6pt] 0 \\[6pt]
\dfrac{1}{2(1-b_2)}[a - c_2] \\[6pt] 0
\end{pmatrix}.
\]

\subsection*{A.3\quad Comparative statics with respect to policy variables}

Total differentiation of the equilibrium system with respect to $(t, s_1, s_2)$ gives:
\[
\begin{pmatrix}
1 & -\dfrac{1}{2(1-b_1)(1-t)} & \dfrac{m}{2(1-b_1)} & 0 \\[6pt]
-\dfrac{1}{(1-s_1)\varphi_1} & 1 & 0 & 0 \\[6pt]
\dfrac{m}{2(1-b_2)} & 0 & 1 & -\dfrac{1}{2(1-b_2)} \\[6pt]
0 & 0 & -\dfrac{1}{(1-s_2)\varphi_2} & 1
\end{pmatrix}
\begin{pmatrix} dq_1 \\ dk_1 \\ dq_2 \\ dk_2 \end{pmatrix}
=
\begin{pmatrix}
-\dfrac{c_1-k_1}{2(1-b_1)(1-t)^2}\,dt \\[6pt]
\dfrac{q_1}{(1-s_1)^2\varphi_1}\,ds_1 \\[6pt]
0 \\[6pt]
\dfrac{q_2}{(1-s_2)^2\varphi_2}\,ds_2
\end{pmatrix}.
\]
Applying Cramer's rule, the denominator is $\Delta^k$ as defined in equation~\eqref{eq:Delta_k}. The numerators yield:

\paragraph{Effects of the home tax $t$:}
\begin{align*}
\frac{\partial q_1}{\partial t} &= -\frac{1}{\Delta^k}\,\frac{c_1 - k_1}{2(1-b_1)(1-t)^2}\left[1 - \frac{1}{2(1-b_2)(1-s_2)\varphi_2}\right] < 0, \\[4pt]
\frac{\partial k_1}{\partial t} &= -\frac{1}{\Delta^k}\,\frac{c_1 - k_1}{2(1-b_1)(1-t)^2(1-s_1)\varphi_1}\left[1 - \frac{1}{2(1-b_2)(1-s_2)\varphi_2}\right] < 0, \\[4pt]
\frac{\partial q_2}{\partial t} &= \frac{1}{\Delta^k}\,\frac{m(c_1 - k_1)}{4(1-b_1)(1-b_2)(1-t)^2} \;\gtrless\; 0 \;\text{ for }\; m \;\gtrless\; 0, \\[4pt]
\frac{\partial k_2}{\partial t} &= \frac{1}{\Delta^k}\,\frac{m(c_1 - k_1)}{4(1-b_1)(1-b_2)(1-t)^2(1-s_2)\varphi_2} \;\gtrless\; 0 \;\text{ for }\; m \;\gtrless\; 0.
\end{align*}

\paragraph{Effects of the foreign subsidy $s_1$:}
\begin{align*}
\frac{\partial q_1}{\partial s_1} &= \frac{1}{\Delta^k}\,\frac{q_1}{2(1-b_1)(1-t)(1-s_1)^2\varphi_1}\left[1 - \frac{1}{2(1-b_2)(1-s_2)\varphi_2}\right] > 0, \\[4pt]
\frac{\partial k_1}{\partial s_1} &= \frac{1}{\Delta^k}\,\frac{q_1}{(1-s_1)^2\varphi_1}\left[1 - \frac{1}{2(1-b_2)(1-s_2)\varphi_2} - \frac{m^2}{4(1-b_1)(1-b_2)}\right] > 0, \\[4pt]
\frac{\partial q_2}{\partial s_1} &= -\frac{1}{\Delta^k}\,\frac{mq_1}{4(1-b_1)(1-b_2)(1-t)(1-s_1)^2\varphi_1} \;\lessgtr\; 0 \;\text{ for }\; m \;\gtrless\; 0, \\[4pt]
\frac{\partial k_2}{\partial s_1} &= -\frac{1}{\Delta^k}\,\frac{mq_1}{4(1-b_1)(1-b_2)(1-t)(1-s_1)^2\varphi_1(1-s_2)\varphi_2} \;\lessgtr\; 0 \;\text{ for }\; m \;\gtrless\; 0.
\end{align*}

\paragraph{Effects of the home subsidy $s_2$:}
\begin{align*}
\frac{\partial q_1}{\partial s_2} &= -\frac{1}{\Delta^k}\,\frac{mq_2}{4(1-b_1)(1-b_2)(1-s_2)^2\varphi_2} \;\lessgtr\; 0 \;\text{ for }\; m \;\gtrless\; 0, \\[4pt]
\frac{\partial q_2}{\partial s_2} &= \frac{1}{\Delta^k}\,\frac{q_2}{2(1-b_2)(1-s_2)^2\varphi_2}\left[1 - \frac{1}{2(1-b_1)(1-t)(1-s_1)\varphi_1}\right] > 0, \\[4pt]
\frac{\partial k_2}{\partial s_2} &= \frac{1}{\Delta^k}\,\frac{q_2}{(1-s_2)^2\varphi_2}\left[1 - \frac{1}{2(1-b_1)(1-t)(1-s_1)\varphi_1} - \frac{m^2}{4(1-b_1)(1-b_2)}\right] > 0.
\end{align*}

\subsection*{A.4\quad Comparative statics with respect to network externalities}

Total differentiation with respect to $(b_1, b_2)$:
\[
\begin{pmatrix}
1 & -\dfrac{1}{2(1-b_1)(1-t)} & \dfrac{m}{2(1-b_1)} & 0 \\[6pt]
-\dfrac{1}{(1-s_1)\varphi_1} & 1 & 0 & 0 \\[6pt]
\dfrac{m}{2(1-b_2)} & 0 & 1 & -\dfrac{1}{2(1-b_2)} \\[6pt]
0 & 0 & -\dfrac{1}{(1-s_2)\varphi_2} & 1
\end{pmatrix}
\begin{pmatrix} dq_1 \\ dk_1 \\ dq_2 \\ dk_2 \end{pmatrix}
=
\begin{pmatrix}
\dfrac{a - \frac{c_1-k_1}{1-t} - mq_2}{2(1-b_1)^2}\,db_1 \\[6pt]
0 \\[6pt]
\dfrac{a - (c_2-k_2) - mq_1}{2(1-b_2)^2}\,db_2 \\[6pt]
0
\end{pmatrix}.
\]
The resulting derivatives are:
\begin{align*}
\frac{\partial q_1}{\partial b_1} &= \frac{1}{\Delta^k}\,\frac{a - \frac{c_1-k_1}{1-t} - mq_2}{2(1-b_1)^2}\left[1 - \frac{1}{2(1-b_2)(1-s_2)\varphi_2}\right] > 0, \\[4pt]
\frac{\partial q_1}{\partial b_2} &= -\frac{m}{\Delta^k}\,\frac{a - (c_2-k_2) - mq_1}{4(1-b_1)(1-b_2)^2} \;\lessgtr\; 0 \;\text{ for }\; m \;\gtrless\; 0, \\[4pt]
\frac{\partial k_1}{\partial b_1} &= \frac{1}{\Delta^k}\,\frac{a - \frac{c_1-k_1}{1-t} - mq_2}{2(1-b_1)^2(1-s_1)\varphi_1}\left[1 - \frac{1}{2(1-b_2)(1-s_2)\varphi_2}\right] > 0, \\[4pt]
\frac{\partial k_1}{\partial b_2} &= -\frac{m}{\Delta^k}\,\frac{a - (c_2-k_2) - mq_1}{4(1-b_1)(1-b_2)^2(1-s_1)\varphi_1} \;\lessgtr\; 0 \;\text{ for }\; m \;\gtrless\; 0.
\end{align*}
The results for the home firm are symmetric:
\begin{align*}
\frac{\partial q_2}{\partial b_2} &= \frac{1}{\Delta^k}\,\frac{a - (c_2-k_2) - mq_1}{2(1-b_2)^2}\left[1 - \frac{1}{2(1-b_1)(1-t)(1-s_1)\varphi_1}\right] > 0, \\[4pt]
\frac{\partial q_2}{\partial b_1} &= -\frac{m}{\Delta^k}\,\frac{a - \frac{c_1-k_1}{1-t} - mq_2}{4(1-b_1)^2(1-b_2)} \;\lessgtr\; 0 \;\text{ for }\; m \;\gtrless\; 0, \\[4pt]
\frac{\partial k_2}{\partial b_2} &= \frac{1}{\Delta^k}\,\frac{a - (c_2-k_2) - mq_1}{2(1-b_2)^2(1-s_2)\varphi_2}\left[1 - \frac{1}{2(1-b_1)(1-t)(1-s_1)\varphi_1}\right] > 0, \\[4pt]
\frac{\partial k_2}{\partial b_1} &= -\frac{m}{\Delta^k}\,\frac{a - \frac{c_1-k_1}{1-t} - mq_2}{4(1-b_1)^2(1-b_2)(1-s_2)\varphi_2} \;\lessgtr\; 0 \;\text{ for }\; m \;\gtrless\; 0.
\end{align*}

\subsection*{A.5\quad Optimal foreign subsidy}

The foreign government maximises $W_1 = \pi_1 - s_1 \frac{\varphi_1 k_1^2}{2}$. Setting $dW_1/ds_1 = 0$:
\[
0 = \frac{dW_1}{ds_1} = -[1-t]\,q_1\,m\,\frac{dq_2}{ds_1} - s_1\,\varphi_1\,k_1\,\frac{dk_1}{ds_1}.
\]
Substituting the expressions for $\partial q_2/\partial s_1$ and $\partial k_1/\partial s_1$ from Section~A.3 and simplifying yields:
\[
s_1^* = \frac{\dfrac{m^2}{4(1-b_1)(1-b_2)}}{1 - \dfrac{1}{2(1-b_2)(1-s_2)\varphi_2}}.
\]
The slope of the reaction function is:
\[
\frac{ds_1^*}{ds_2} = \frac{1}{\left[1 - \frac{1}{2(1-b_2)(1-s_2)\varphi_2}\right]^2}\,\frac{m^2}{8(1-b_1)(1-b_2)^2(1-s_2)^2\varphi_2} > 0 \quad\text{for } m \neq 0.
\]

\subsection*{A.6\quad Optimal home subsidy and tax}

The home government maximises $W_2 = U + \pi_2 - s_2\frac{\varphi_2 k_2^2}{2} + tp_1q_1$. The first-order condition for the subsidy is:
\[
\frac{dW_2}{ds_2} = \left[q_1 + t\,\frac{c_1 - k_1}{1-t}\right]\frac{dq_1}{ds_2} - \frac{s_2\,q_2}{1-s_2}\,\frac{dk_2}{ds_2} + [m(1-t)q_1 + q_2]\,\frac{dq_2}{ds_2} = 0.
\]
Setting $m = 0$ and substituting the comparative statics yields $s_2^* = \frac{1}{1+2(1-b_2)}$.

The first-order condition for the optimal tax is:
\begin{multline*}
\frac{dW_2}{dt} = q_1\left[\frac{dq_1}{dt} + m\,\frac{dq_2}{dt}\right] + q_2\left[\frac{dq_2}{dt} + m\,\frac{dq_1}{dt}\right] - mq_2\,\frac{dq_1}{dt} \\
- s_2\,\varphi_2\,k_2\,\frac{dk_2}{dt} + t\left[p_1\,\frac{dq_1}{dt} + q_1\left(-(1-b_1)\frac{dq_1}{dt} - m\,\frac{dq_2}{dt}\right)\right] = 0.
\end{multline*}
For $m \neq 0$, this equation together with $dW_2/ds_2 = 0$ forms a system that cannot be solved in closed form. The numerical solution is presented in Section~\ref{sec:simulations}.

\section{Derivations for Product R\&D Equilibrium}\label{app:derivations_product}

Set $k_1 = k_2 = 0$ throughout.

\subsection*{B.1\quad Second-order conditions}

The Hessian conditions require:
\begin{align*}
\Delta_1^{r*} &= 2(1-b_1)(1-t)(1-\sigma_1)\theta_1\left[1 - \frac{1-t}{2(1-b_1)(1-\sigma_1)\theta_1}\right] > 0, \\
\Delta_2^{r*} &= 2(1-b_2)(1-\sigma_2)\theta_2\left[1 - \frac{1}{2(1-b_2)(1-\sigma_2)\theta_2}\right] > 0,
\end{align*}
which hold for $b_i \in (0,1)$ and $\sigma_i \in (0,1)$.

\subsection*{B.2\quad Comparative statics with respect to policy variables}

Total differentiation of the product R\&D equilibrium system gives:
\[
\begin{pmatrix}
1 & -\dfrac{1}{2(1-b_1)} & \dfrac{m}{2(1-b_1)} & 0 \\[6pt]
-\dfrac{1-t}{(1-\sigma_1)\theta_1} & 1 & 0 & 0 \\[6pt]
\dfrac{m}{2(1-b_2)} & 0 & 1 & -\dfrac{1}{2(1-b_2)} \\[6pt]
0 & 0 & -\dfrac{1}{(1-\sigma_2)\theta_2} & 1
\end{pmatrix}
\begin{pmatrix} dq_1 \\ dr_1 \\ dq_2 \\ dr_2 \end{pmatrix}
=
\begin{pmatrix}
-\dfrac{c_1}{2(1-b_1)(1-t)^2}\,dt \\[6pt]
-\dfrac{q_1}{(1-\sigma_1)\theta_1}\,dt + \dfrac{(1-t)q_1}{(1-\sigma_1)^2\theta_1}\,d\sigma_1 \\[6pt]
0 \\[6pt]
\dfrac{q_2}{(1-\sigma_2)^2\theta_2}\,d\sigma_2
\end{pmatrix}.
\]
The denominator is $\Delta^r$ as defined in equation~\eqref{eq:Delta_r}. Applying Cramer's rule:

\paragraph{Effects of the home tax $t$:}
\begin{align*}
\frac{dq_1}{dt} &= -\frac{1}{\Delta^r}\,\frac{1}{2(1-b_1)}\left[\frac{q_1}{(1-\sigma_1)\theta_1} + \frac{c_1}{(1-t)^2}\right]\!\left[1 - \frac{1}{2(1-b_2)(1-\sigma_2)\theta_2}\right] < 0, \\[4pt]
\frac{dq_2}{dt} &= \frac{m}{\Delta^r}\,\frac{1}{2(1-b_2)}\left[\frac{q_1}{2(1-b_1)(1-\sigma_1)\theta_1} + \frac{c_1}{2(1-b_1)(1-t)^2}\right] \;\gtrless\; 0 \;\text{ for }\; m \;\gtrless\; 0.
\end{align*}

\paragraph{Effects of the foreign subsidy $\sigma_1$:}
\begin{align*}
\frac{dq_1}{d\sigma_1} &= \frac{1}{\Delta^r}\,\frac{(1-t)q_1}{2(1-b_1)(1-\sigma_1)^2\theta_1}\left[1 - \frac{1}{2(1-b_2)(1-\sigma_2)\theta_2}\right] > 0, \\[4pt]
\frac{dq_2}{d\sigma_1} &= -\frac{m}{\Delta^r}\,\frac{(1-t)q_1}{4(1-b_1)(1-b_2)(1-\sigma_1)^2\theta_1} \;\lessgtr\; 0 \;\text{ for }\; m \;\gtrless\; 0.
\end{align*}

\paragraph{Effects of the home subsidy $\sigma_2$:}
\begin{align*}
\frac{dq_1}{d\sigma_2} &= -\frac{1}{\Delta^r}\,\frac{q_2}{(1-\sigma_2)^2\theta_2}\,\frac{m}{4(1-b_1)(1-b_2)} \;\lessgtr\; 0 \;\text{ for }\; m \;\gtrless\; 0, \\[4pt]
\frac{dq_2}{d\sigma_2} &= \frac{1}{\Delta^r}\,\frac{q_2}{2\theta_2(1-b_2)(1-\sigma_2)^2}\left[1 - \frac{1-t}{2(1-b_1)(1-\sigma_1)\theta_1}\right] > 0.
\end{align*}

The R\&D investment effects follow from $r_1 = (1-t)q_1/[(1-\sigma_1)\theta_1]$ and $r_2 = q_2/[(1-\sigma_2)\theta_2]$. The full expressions for $dr_i/dt$, $dr_i/d\sigma_1$, $dr_i/d\sigma_2$ are:
\begin{align*}
\frac{dr_1}{dt} &= -\frac{1}{(1-\sigma_1)\theta_1}\!\left[q_1 + \frac{1}{\Delta^r}\,\frac{1-t}{2(1-b_1)}\!\left(\frac{q_1}{(1-\sigma_1)\theta_1} + \frac{c_1}{(1-t)^2}\right)\!\left(1 - \frac{1}{2(1-b_2)(1-\sigma_2)\theta_2}\right)\right] < 0, \\[4pt]
\frac{dr_1}{d\sigma_1} &= \frac{(1-t)q_1}{(1-\sigma_1)^2\theta_1}\left[1 + \frac{1}{\Delta^r}\,\frac{1-t}{2(1-b_1)(1-\sigma_1)\theta_1}\!\left(1 - \frac{1}{2(1-b_2)(1-\sigma_2)\theta_2}\right)\right] > 0, \\[4pt]
\frac{dr_2}{d\sigma_2} &= \frac{1}{\Delta^r}\,\frac{q_2}{(1-\sigma_2)^2\theta_2}\left[1 - \frac{1-t}{2(1-b_1)(1-\sigma_1)\theta_1} - \frac{m^2}{4(1-b_1)(1-b_2)}\right] > 0.
\end{align*}

\subsection*{B.3\quad Comparative statics with respect to network externalities}

Total differentiation with respect to $(b_1, b_2)$ yields:
\begin{align*}
\frac{dq_1}{db_1} &= \frac{1}{\Delta^r}\,\frac{1}{2(1-b_1)^2}\left[a + r_1 - \frac{c_1}{1-t} - mq_2\right]\!\left[1 - \frac{1}{2(1-b_2)(1-\sigma_2)\theta_2}\right] > 0, \\[4pt]
\frac{dq_1}{db_2} &= -\frac{m}{\Delta^r}\,\frac{a + r_2 - c_2 - mq_1}{4(1-b_1)(1-b_2)^2} \;\lessgtr\; 0 \;\text{ for }\; m \;\gtrless\; 0, \\[4pt]
\frac{dq_2}{db_2} &= \frac{1}{\Delta^r}\,\frac{1}{2(1-b_2)^2}[a + r_2 - c_2 - mq_1]\!\left[1 - \frac{1-t}{2(1-b_1)(1-\sigma_1)\theta_1}\right] > 0, \\[4pt]
\frac{dq_2}{db_1} &= -\frac{m}{\Delta^r}\,\frac{a + r_1 - \frac{c_1}{1-t} - mq_2}{4(1-b_1)^2(1-b_2)} \;\lessgtr\; 0 \;\text{ for }\; m \;\gtrless\; 0.
\end{align*}
The R\&D investment effects follow from the chain rule applied to $r_i(q_i)$:
\begin{align*}
\frac{dr_1}{db_1} &= \frac{1}{\Delta^r}\,\frac{1-t}{2(1-b_1)^2(1-\sigma_1)\theta_1}\left[a + r_1 - \frac{c_1}{1-t} - mq_2\right]\!\left[1 - \frac{1}{2(1-b_2)(1-\sigma_2)\theta_2}\right] > 0, \\[4pt]
\frac{dr_2}{db_2} &= \frac{1}{\Delta^r}\,\frac{1}{2(1-b_2)^2(1-\sigma_2)\theta_2}[a + r_2 - c_2 - mq_1]\!\left[1 - \frac{1-t}{2(1-b_1)(1-\sigma_1)\theta_1}\right] > 0.
\end{align*}
Cross effects ($dr_1/db_2$, $dr_2/db_1$) have the same sign as $-m$.

\subsection*{B.4\quad Optimal foreign subsidy}

The foreign government maximises $W_1 = \pi_1 - \sigma_1\frac{\theta_1 r_1^2}{2}$. Setting $dW_1/d\sigma_1 = 0$:
\[
0 = -m(1-t)q_1\,\frac{dq_2}{d\sigma_1} - \sigma_1\,\theta_1\,r_1\,\frac{dr_1}{d\sigma_1}.
\]
Substituting and simplifying:
\[
\sigma_1^* = \frac{\dfrac{m^2}{4(1-b_1)(1-b_2)}}{1 - \dfrac{1}{2(1-b_2)(1-\sigma_2)\theta_2}}.
\]

\subsection*{B.5\quad Optimal home subsidy and tax}

The home government maximises $W_2 = U + \pi_2 - \sigma_2\frac{\theta_2 r_2^2}{2} + tp_1q_1$. The first-order condition for the subsidy simplifies to:
\[
0 = \left[q_1 + \frac{t}{1-t}\,c_1\right]\frac{dq_1}{d\sigma_2} + [q_2 + m(1-t)q_1]\,\frac{dq_2}{d\sigma_2} + tq_1\,\frac{dr_1}{d\sigma_2} - \frac{\sigma_2}{1-\sigma_2}\,q_2\,\frac{dr_2}{d\sigma_2} = 0.
\]
Setting $m = 0$ yields $\sigma_2^* = \frac{1}{1+2(1-b_2)}$.

The first-order condition for the optimal tax is:
\begin{multline*}
0 = q_1\,\frac{d}{dt}(a_1 + r_1 - p_1) + q_2\,\frac{d}{dt}(a_2 + r_2 - p_2) + q_2\,\frac{dp_2}{dq_1}\,\frac{dq_1}{dt} \\
- \sigma_2\,\theta_2\,r_2\,\frac{dr_2}{dt} + p_1\,q_1 + t\left[p_1\,\frac{dq_1}{dt} + q_1\,\frac{dp_1}{dt}\right] = 0.
\end{multline*}
For $m \neq 0$, this system is solved numerically in Section~\ref{sec:simulations}.

%% file: supplementary_material.tex

\setcounter{section}{0}
\renewcommand{\thesection}{S.\arabic{section}}
\renewcommand{\thesubsection}{S.\arabic{section}.\arabic{subsection}}
\renewcommand{\thefigure}{S.\arabic{figure}}
\setcounter{figure}{0}
\renewcommand{\thetable}{S.\arabic{table}}
\setcounter{table}{0}
\renewcommand{\theequation}{S.\arabic{equation}}
\setcounter{equation}{0}

\begin{center}
{\Large\bfseries Supplementary Material}
\end{center}

\medskip

\noindent This supplement provides institutional background on the tax instrument (Section~\ref{sm:tax_instruments}), reports the optimal policy instruments as functions of network externality strength (Section~\ref{sm:policy_instruments}), presents numerical results for complementary goods (Section~\ref{sm:complements}), and discusses limitations and extensions (Section~\ref{sm:extensions}). All parameter values are at the baseline specified in Section~\ref{sec:calibration} unless otherwise noted.

\section{Revenue-Based Taxation of Foreign Firms}\label{sm:tax_instruments}

The model's tax parameter $t$ represents a proportional levy on the foreign firm's revenue from domestic sales. Several classes of real-world policy instruments share this economic structure, differing in legal form and scope but operating through the same mechanism. They reduce the foreign firm's net receipts per unit of domestic market activity without directly changing the consumer-facing price.

Digital services taxes (DSTs) are the most prominent recent example. Adopted by at least twelve OECD countries since 2017, DSTs are levied on the revenue of non-resident firms from specified digital activities in the taxing jurisdiction, at rates ranging from 1.5\% to 7.5\% \citep{Enache_2025}. They emerged after the OECD's BEPS Action~1 Report identified the tax challenges of the digital economy but failed to produce a multilateral consensus \citep{OECD_2018}, and their proliferation has generated significant trade tensions.\footnote{In February 2025, US President Trump issued a presidential memorandum directing investigations into DSTs imposed by several countries, characterising them as discriminatory measures against American companies.} India's Equalisation Levy (2016--2025) applied the same logic at rates of 2--6\% on non-resident firms' gross receipts from e-commerce and advertising. Withholding taxes on cross-border royalties, licensing fees, and technical service payments operate at higher rates (10--35\% in many jurisdictions) and represent a longer-established class of revenue-based instruments with the same structure. Diverted profits taxes (25--40\%) and scale-dependent regulatory compliance costs (such as the EU's Digital Markets Act and Digital Services Act) further expand the range of instruments that reduce the foreign firm's effective return from domestic market participation.

Table~\ref{tab:tax_summary} summarises these instruments. The common structural feature is a proportional reduction in the foreign firm's net receipts from sales or activity in the home market. The model's parameter $t$ captures this shared economic logic while abstracting from the institutional and legal variation across specific instruments.

\begin{table}[ht]
\centering
\caption{Revenue-based policy instruments and the model's tax parameter $t$}
\label{tab:tax_summary}
\renewcommand{\arraystretch}{1.25}
\begin{tabular}{@{}p{4cm} p{2.2cm} p{7cm}@{}}
\toprule
\textbf{Instrument} & \textbf{Typical rate} & \textbf{Mechanism} \\
\midrule
Digital services taxes & 1.5--7.5\% & Tax on foreign firm's revenue from specified digital activities in the home market \\[3pt]
Equalisation levies & 2--6\% & Tax on non-resident firm's gross receipts from e-commerce or advertising in the home market \\[3pt]
Withholding taxes on royalties and services & 10--35\% & Proportional levy on cross-border payments for IP, licensing, and technical services \\[3pt]
Diverted profits taxes & 25--40\% & Surcharge on profits deemed artificially shifted away from the home jurisdiction \\[3pt]
Regulatory compliance costs (DMA, DSA) & Varies & Obligations scaling with market activity that reduce the foreign firm's net return \\
\bottomrule
\end{tabular}
\end{table}

\section{Policy Instruments as Functions of Network Externality Strength}\label{sm:policy_instruments}

This section reports the optimal policy instruments under the Nash equilibrium as functions of the common network externality parameter $b = b_1 = b_2$, for both R\&D specifications and both values of substitutability ($m = 0.05$ and $m = 0.25$). These figures complement the welfare-difference figures in the main text (Section~\ref{sec:simulations}) by showing how the policy mix itself changes as network externalities strengthen.

\subsection{Process R\&D}

Figure~\ref{fig:policy_process_b} plots the optimal foreign subsidy $s_1^*$, the optimal home subsidy $s_2^*$, and the optimal home tax $t^*$ as functions of $b$ under process R\&D.

\begin{figure}[ht]
    \centering
    \includegraphics[width=\textwidth]{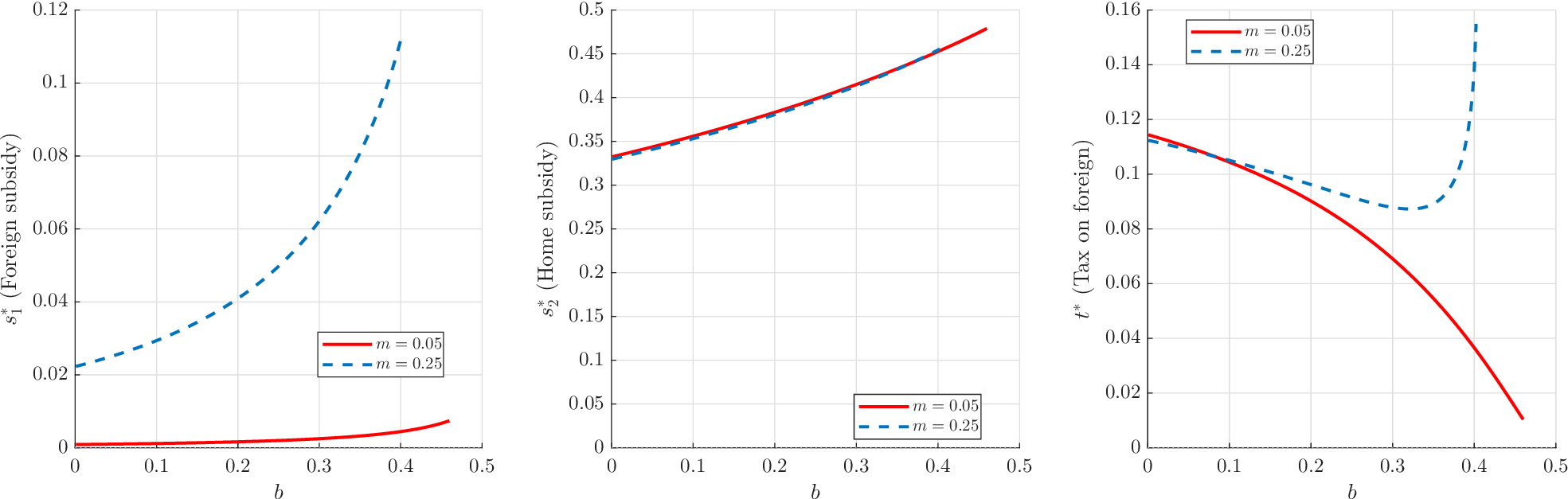}
    \caption{Process R\&D: optimal policy instruments as functions of network externality strength $b$, for $m = 0.05$ (solid) and $m = 0.25$ (dashed).}
    \label{fig:policy_process_b}
\end{figure}

Three features are noteworthy. The foreign subsidy $s_1^*$ rises in $b$, steeply at higher substitutability, reflecting the foreign government's need to defend its firm's competitive position as the network externality strengthens. At $m = 0.05$ the foreign subsidy is close to zero throughout, because the business-stealing motive that drives the foreign government to subsidise is weak when goods are nearly independent. The home subsidy $s_2^*$ also rises in $b$, consistent with the analytical result that the optimal subsidy is increasing in $b_2$ (Proposition~\ref{prop:home_policy_process}). The two curves for $s_2^*$ are nearly indistinguishable across the two values of $m$, suggesting that at these parameter values the home subsidy is driven primarily by network externality strength. The tax, by contrast, is sensitive to substitutability. The home tax $t^*$ declines monotonically in $b$ at both values of $m$ and remains positive throughout the admissible range. The decline is steeper at $m = 0.05$, where the tax falls from approximately $0.11$ to near zero. As network externalities strengthen, the home government shifts the policy mix away from taxation and toward subsidisation.

\subsection{Product R\&D}

Figure~\ref{fig:policy_product_b} plots the corresponding instruments under product R\&D.

\begin{figure}[ht]
    \centering
    \includegraphics[width=\textwidth]{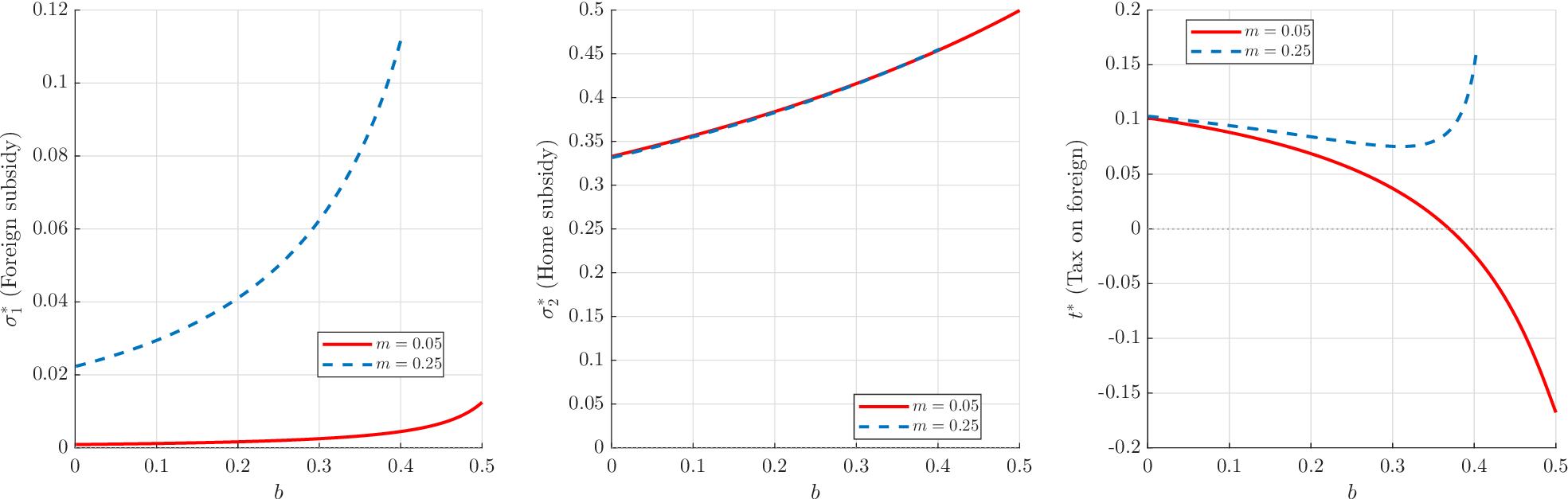}
    \caption{Product R\&D: optimal policy instruments as functions of network externality strength $b$, for $m = 0.05$ (solid) and $m = 0.25$ (dashed).}
    \label{fig:policy_product_b}
\end{figure}

The foreign subsidy $\sigma_1^*$ and the home subsidy $\sigma_2^*$ display qualitative patterns similar to process R\&D. The critical difference is in the tax. At $m = 0.05$, $t^*$ declines steeply in $b$ and crosses zero at high $b$, turning negative. A negative tax is an import subsidy. This sign reversal is the mechanism behind the both-gain result reported in Section~\ref{sec:sim_product}. When $t^* < 0$, the home government pays the foreign firm to expand its domestic presence, because the resulting quality improvement raises consumer welfare by more than the fiscal cost. At $m = 0.25$, $t^*$ also declines in $b$ but does not cross zero within the admissible range. The tax remains positive throughout, consistent with the absence of the both-gain result at higher substitutability.
\section{Complementary Goods}\label{sm:complements}
 
The analysis in the main text focuses on substitute goods ($m > 0$), the empirically relevant case for industrial policy competition in sectors such as semiconductors, EVs, and AI platforms. This section summarises the analytical predictions and numerical results for the complementary-goods case ($m < 0$), which applies to settings where the two countries' products are used jointly rather than as alternatives. Under the intermediate-input interpretation of Remark~\ref{rem:isomorphism}, complementarity corresponds to sectors where the foreign and domestic goods serve as inputs to a shared downstream industry.

The propositions in Sections~\ref{sec:processRD} and~\ref{sec:productRD} cover both signs of $m$. When $m < 0$, three qualitative changes follow directly from the comparative statics.
 
First, cross-policy effects reverse sign. Under substitutability, a subsidy to the domestic firm reduces the rival's output, and the home tax increases home output. Under complementarity, both effects flip. A subsidy to the domestic firm increases the rival's output, because higher domestic output raises demand for the complementary foreign good. The home tax reduces home output as well as foreign output, because suppressing the foreign good also suppresses demand for the domestic good. The carrot-and-stick logic therefore breaks down under complementarity. The ``stick'' (the tax) harms the home firm rather than helping it.
 
Second, the foreign subsidy best response remains positive ($s_1^* > 0$ and $\sigma_1^* > 0$ for $m \neq 0$), and subsidies remain strategic complements ($ds_1^*/ds_2 > 0$). Both results follow from the quadratic dependence on $m$ in the best-response expressions (Propositions~\ref{prop:foreign_subsidy_process} and~\ref{prop:foreign_subsidy_product}). The foreign government subsidises its firm regardless of whether goods are substitutes or complements, though the magnitude and welfare consequences differ.
 
Third, the home government's optimal policy mix shifts toward heavier reliance on the subsidy. The tax becomes less attractive because its revenue-generating benefit is offset by the welfare cost of reducing foreign output on which domestic consumers depend. For sufficiently strong complementarity, the home government may find it optimal to set a negative tax (an import subsidy on the foreign good) even at low values of $b$. This contrasts with the substitute case, where the tax is positive at low $b$ and turns negative only under product R\&D at high $b$.

\subsection{Numerical results}

Figures~\ref{fig:welfare_diff_process_comp} and~\ref{fig:welfare_diff_product_comp} report the welfare differences between the Nash equilibrium and laissez-faire for $m = -0.10$ (solid) and $m = -0.25$ (dashed) under process and product R\&D respectively. Figures~\ref{fig:policy_process_comp} and~\ref{fig:policy_product_comp} report the corresponding policy instruments.

\begin{figure}[ht]
    \centering
    \includegraphics[width=\textwidth]{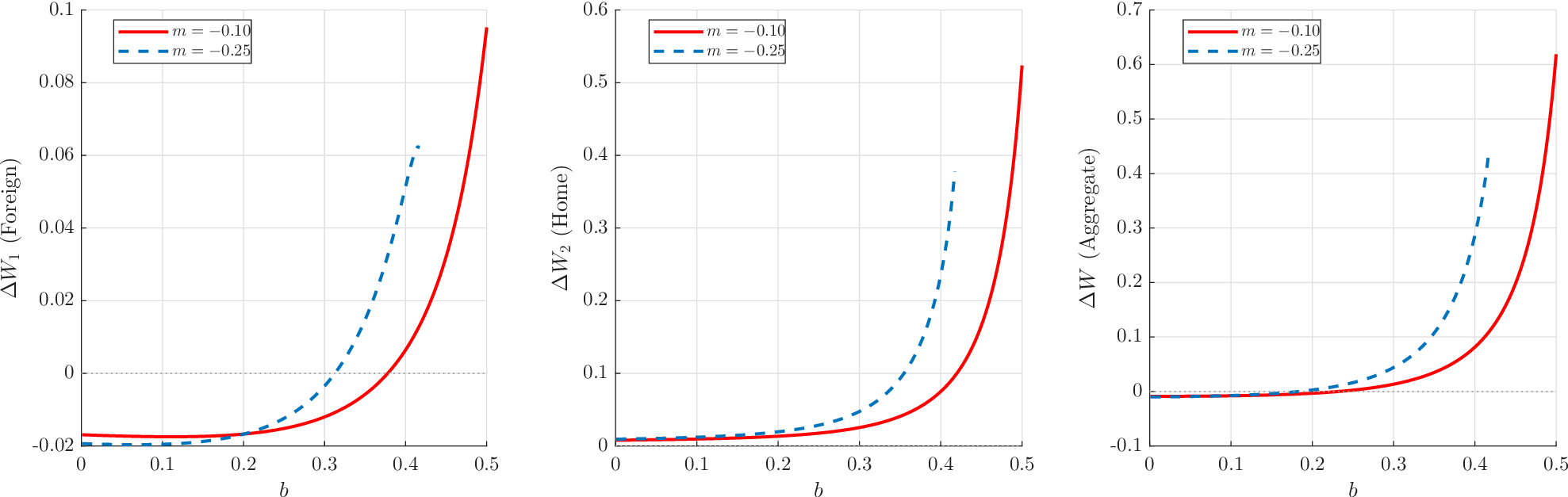}
    \caption{Process R\&D, complements: welfare differences between Nash equilibrium and laissez-faire as functions of network externality strength $b$, for $m = -0.10$ (solid) and $m = -0.25$ (dashed). Left: foreign country switches from losing to gaining at high $b$. Centre: home country gains throughout. Right: aggregate welfare turns positive and rises steeply.}
    \label{fig:welfare_diff_process_comp}
\end{figure}

\begin{figure}[ht]
    \centering
    \includegraphics[width=\textwidth]{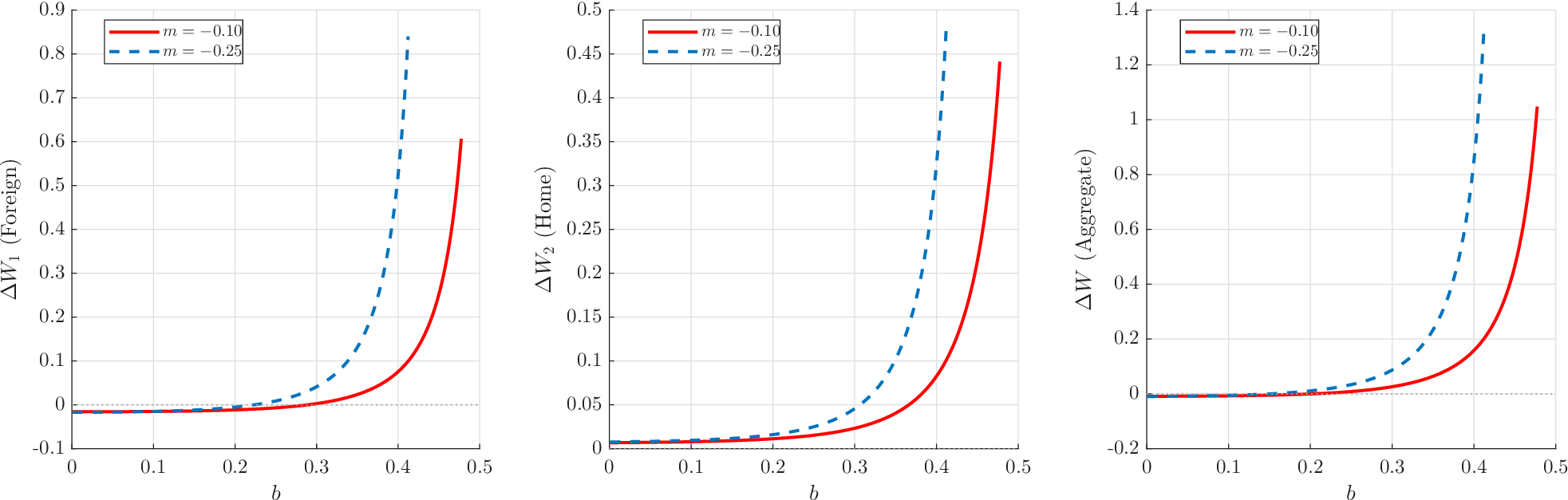}
    \caption{Product R\&D, complements: welfare differences between Nash equilibrium and laissez-faire as functions of network externality strength $b$, for $m = -0.10$ (solid) and $m = -0.25$ (dashed). Left: foreign country switches from losing to gaining at high $b$. Centre: home country gains throughout. Right: aggregate welfare gains are substantially larger than under process R\&D.}
    \label{fig:welfare_diff_product_comp}
\end{figure}

\begin{figure}[ht]
    \centering
    \includegraphics[width=\textwidth]{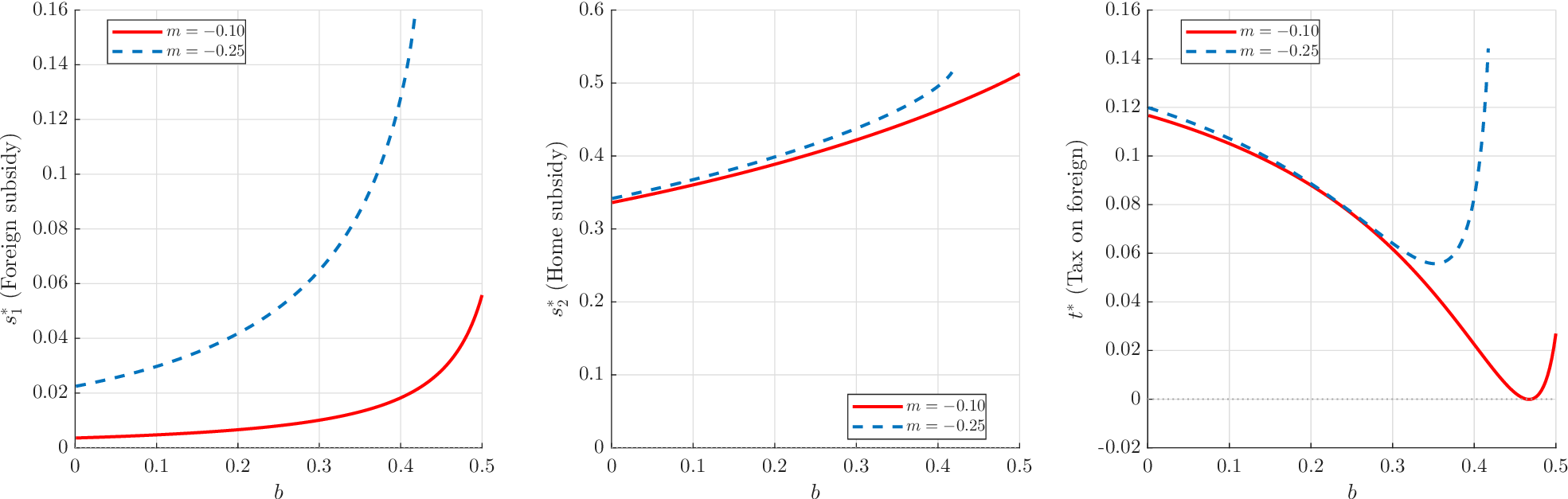}
    \caption{Process R\&D, complements: optimal policy instruments as functions of network externality strength $b$, for $m = -0.10$ (solid) and $m = -0.25$ (dashed).}
    \label{fig:policy_process_comp}
\end{figure}

\begin{figure}[ht]
    \centering
    \includegraphics[width=\textwidth]{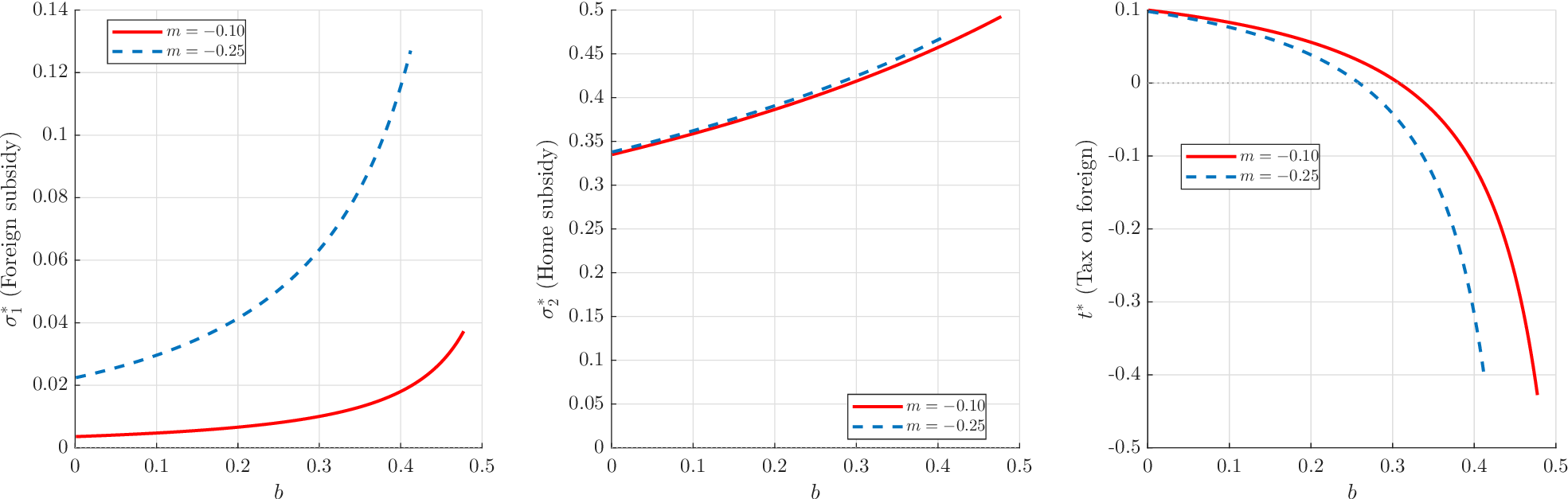}
    \caption{Product R\&D, complements: optimal policy instruments as functions of network externality strength $b$, for $m = -0.10$ (solid) and $m = -0.25$ (dashed).}
    \label{fig:policy_product_comp}
\end{figure}

The complement case clarifies the role of the business-stealing effect. When goods are complements, the business-stealing cost that prevents both-gain under substitutes reverses sign. Subsidising the foreign firm raises rather than reduces demand for the domestic good. Three findings stand out.

First, both-gain appears under both R\&D specifications. This is the only region of the parameter space in which process R\&D generates both-gain. The mechanism differs from the product R\&D case. Under process R\&D with complements, the home government keeps the tax positive but low (Figure~\ref{fig:policy_process_comp}), because taxing the foreign firm would also suppress demand for the domestic good. The home subsidy raises domestic output, which raises demand for the foreign good through the complementarity channel. Both countries benefit not from an import subsidy but from mutual demand reinforcement. Under product R\&D, the home government goes further and sets a deeply negative tax, with import subsidies reaching 30 to 40\% at high $b$ (Figure~\ref{fig:policy_product_comp}). The consumer-surplus gain from the foreign good's quality improvement is amplified by the complementarity, because higher foreign quality raises willingness to pay for both goods.

Second, the welfare gains under complements are substantially larger than under substitutes. Under product R\&D the aggregate welfare difference reaches values several times larger than in the substitutes case (compare Figure~\ref{fig:welfare_diff_product_comp} with Figure~\ref{fig:welfare_diff_product}). Stronger complementarity ($m = -0.25$) produces earlier and larger both-gain than weaker complementarity ($m = -0.10$), the opposite of the substitutes case in which stronger rivalry ($m = 0.25$) eliminated both-gain.

Third, the complement case connects directly to the supply-chain isomorphism of Remark~\ref{rem:isomorphism}. When the foreign and domestic goods are used together in a downstream industry, the case for subsidising foreign presence is strongest. This maps to the logic behind CHIPS Act incentives for foreign semiconductor manufacturers, where the foreign firm's chips are complements to domestically produced equipment and design tools, not substitutes.

\section{Limitations and Extensions}\label{sm:extensions}
 
\subsection{Limitations}
 
The model adopts several simplifications that are standard in the strategic-trade literature but worth noting explicitly. It is static and characterises the equilibrium policy mix at a point in time but cannot address the dynamics of technology accumulation or the path dependence that is central to network-effect competition in practice. The functional forms (quadratic utility, linear demand, quadratic R\&D costs) yield analytical tractability but impose symmetry and linearity that may not hold in specific applications. Consumers reside only in the home country, so the welfare measure $W_1 + W_2$ omits foreign consumer surplus and should be read as the sum of the two governments' objectives under the maintained assumptions rather than as a comprehensive measure of global welfare. The numerical exercises are illustrative rather than calibrated. The model also abstracts from multi-firm competition, heterogeneous consumers, knowledge spillovers between firms, and imperfect government information about model parameters.
 
\subsection{Theoretical Extensions}
 
Several directions would extend the theoretical framework developed here.
 
\paragraph{Cooperative benchmark.} Computing the cooperative equilibrium, in which both governments jointly choose $(t, s_1, s_2)$ to maximise $W_1 + W_2$ subject to firms' equilibrium responses, would allow a formal comparison with the non-cooperative outcome. The both-gain result under product R\&D at high $b$ (Section~\ref{sec:sim_product}) makes this comparison particularly interesting, because it raises the question of whether the cooperative equilibrium yields even larger gains or whether the non-cooperative outcome is already close to efficient when network externalities are strong.\footnote{A formal cooperative benchmark is left for future work.}

\paragraph{Dynamic path dependence.} In a dynamic extension, today's market share would determine tomorrow's network externality parameter (for instance, $b_{i,t+1} = f(b_{i,t},\, q_{i,t}/(q_{i,t} + q_{j,t}))$), creating path dependence and potential tipping points beyond which a lagging country cannot catch up without increasingly costly intervention. Such a model could address whether temporary subsidies lead to permanent market-share gains, how optimal policy evolves over the technology lifecycle, and whether the static carrot-and-stick equilibrium characterised here corresponds to the steady state of the dynamic system.
 
\paragraph{Technology cycles and leapfrogging.} Generational transitions in technology partially reset network externalities by creating windows during which the incumbent's advantage is temporarily weakened. A model incorporating such transitions could address the conditions under which a trailing country should target the current generation versus the next generation, connecting to strategic questions about the timing of industrial policy interventions.

\subsection{Empirical Extensions}
 
The model generates several predictions that are testable with firm-level data.
 
\paragraph{Heterogeneous returns to technology adoption.} Proposition~\ref{prop:network_process} predicts that a stronger network externality increases the equilibrium return to R\&D investment. Under the downstream-industry reinterpretation of Remark~\ref{rem:isomorphism}, a supplier whose output feeds into a broader downstream market should experience larger productivity gains from technological upgrading. This can be tested using firm-level data on AI or digital technology adoption as the treatment event, with pre-determined measures of downstream demand breadth constructed from input-output tables as the moderating variable. The model further predicts that the welfare effects of policy intervention depend on whether foreign and domestic inputs are substitutes or complements in the downstream sector. The both-gain result arises under both R\&D specifications when goods are complements (Section~\ref{sm:complements}), but only under product R\&D with sufficient differentiation when goods are substitutes. Input-output tables allow the construction of sector-level measures of substitutability or complementarity between domestic and foreign suppliers, providing a second moderating variable. The combination of a technology adoption shock, downstream demand breadth, and the substitute-complement classification of foreign inputs generates a research design with testable heterogeneous treatment effects across two dimensions of the model's parameter space.
 
\paragraph{Product versus process innovation.} The model predicts qualitatively different welfare outcomes under cost-reducing and quality-enhancing R\&D. If the network externality operates more powerfully through the demand side, then the productivity return to product innovation should be more sensitive to network externality strength than the return to process innovation. Innovation surveys that distinguish between product and process innovation (such as the UK Innovation Survey) allow a direct test by estimating the network-externality interaction separately for each innovation type.
 
\paragraph{R\&D tax incentives and network externalities.} The model predicts that the optimal R\&D subsidy is increasing in network externality strength (Proposition~\ref{prop:home_policy_process}). A testable implication is that the firm-level R\&D response to changes in R\&D tax credit generosity should be larger in sectors with stronger network externalities. Policy reforms that change the effective subsidy rate, such as the April 2023 reduction and April 2024 merger of R\&D tax credit schemes in the UK, provide variation that can be interacted with industry-level measures of downstream demand structure.
 
\paragraph{Public procurement and network externalities.} Public procurement directly expands demand for the supplier's product. In sectors with strong network externalities, this demand expansion reinforces the mechanism captured by $b$ in the model, generating larger and more persistent productivity gains for the supplier firm. Procurement records linked to firm-level outcome data allow a test of whether procurement effects on supplier productivity are increasing in the network externality strength of the supplier's downstream market.

\bigskip

\noindent Together, these directions connect the static theoretical framework to testable causal hypotheses. The theoretical extensions enrich the government's objective, the time horizon, and the strategic environment. The empirical extensions exploit quasi-experimental variation in policy reforms, trade shocks, and procurement to identify the mechanisms through which network externalities shape the returns to innovation and the welfare effects of industrial policy, using input-output-based measures of downstream demand structure as the moderating variable.

%% file: bibliography.bib
@article{Aiginger.Rodrik_2020,
  title = {Rebirth of {{Industrial Policy}} and an {{Agenda}} for the {{Twenty-First Century}}},
  author = {Aiginger, Karl and Rodrik, Dani},
  year = 2020,
  month = jun,
  journal = {Journal of Industry, Competition and Trade},
  volume = {20},
  number = {2},
  pages = {189--207},
  issn = {1573-7012},
  doi = {10.1007/s10842-019-00322-3},
  langid = {english}
}

@article{Baik.Kim_2020,
  title = {Observable versus Unobservable {{R}}\&{{D}} Investments in Duopolies},
  author = {Baik, Kyung Hwan and Kim, Sang-Kee},
  year = 2020,
  journal = {Journal of Economics},
  volume = {130},
  number = {1},
  eprint = {45284187},
  eprinttype = {jstor},
  pages = {37--66},
  publisher = {Springer},
  issn = {0931-8658},
  url = {https://www.jstor.org/stable/45284187}
}

@article{Brander.Spencer_1985,
  title = {Export Subsidies and International Market Share Rivalry},
  author = {Brander, James A. and Spencer, Barbara J.},
  year = 1985,
  month = feb,
  journal = {Journal of International Economics},
  volume = {18},
  number = {1},
  pages = {83--100},
  issn = {0022-1996},
  doi = {10.1016/0022-1996(85)90006-6}
}

@article{Chang.etal_2013,
  title = {Technology Licensing, {{R}}\&{{D}} and Welfare},
  author = {Chang, Ray-Yun and Hwang, Hong and Peng, Cheng-Hau},
  year = 2013,
  month = feb,
  journal = {Economics Letters},
  volume = {118},
  number = {2},
  pages = {396--399},
  issn = {0165-1765},
  doi = {10.1016/j.econlet.2012.11.016}
}

@misc{chips_act_2022,
  title        = {{CHIPS and Science Act} of 2022, {Pub. L. 117-167}, 136 {Stat.} 1366},
  author = {{United States Congress}},
  howpublished = {Government Printing Office},
  year         = 2022,
  note         = {Signed into law August 9, 2022}
}

@misc{Cui.Hashimzade_2019,
  type = {{{SSRN Scholarly Paper}}},
  title = {The {{Digital Services Tax}} as a {{Tax}} on {{Location-Specific Rent}}},
  author = {Cui, Wei and Hashimzade, Nigar},
  year = 2019,
  month = jan,
  number = {3321393},
  eprint = {3321393},
  publisher = {Social Science Research Network},
  address = {Rochester, NY},
  doi = {10.2139/ssrn.3321393},
  archiveprefix = {Social Science Research Network},
  langid = {english}
}

@article{Eaton.Grossman_1986,
  title = {Optimal {{Trade}} and {{Industrial Policy}} under {{Oligopoly}}},
  author = {Eaton, Jonathan and Grossman, Gene M.},
  year = 1986,
  journal = {The Quarterly Journal of Economics},
  volume = {101},
  number = {2},
  eprint = {1891121},
  eprinttype = {jstor},
  pages = {383--406},
  publisher = {Oxford University Press},
  issn = {0033-5533},
  doi = {10.2307/1891121}
}

@article{Economides_1996,
  title = {The Economics of Networks},
  author = {Economides, Nicholas},
  year = 1996,
  month = oct,
  journal = {International Journal of Industrial Organization},
  volume = {14},
  number = {6},
  pages = {673--699},
  issn = {0167-7187},
  doi = {10.1016/0167-7187(96)01015-6}
}

@misc{Enache_2025,
  title = {Digital {{Services Taxes}} in {{Europe}}, 2025},
  author = {Enache, Cristina},
  year = 2025,
  month = may,
  journal = {Tax Foundation},
  url = {https://taxfoundation.org/data/all/eu/digital-services-taxes-europe/},
  langid = {american}
}

@techreport{Erten.etal_2025,
  title = {Employment {{Impacts}} of the {{CHIPS Act}}},
  author = {Erten, Bilge and Stiglitz, Joseph E. and Verhoogen, Eric},
  year = 2025,
  number = {20982},
  address = {Paris \& London},
  institution = {CEPR Press}
}

@techreport{EuropeanCommission_2023,
  title = {European {{Chips Act}}},
  author = {{European Commission}},
  year = 2023,
  institution = {European Commission},
  url = {http://data.europa.eu/eli/reg/2023/1781/oj}
}

@article{Farrell.Saloner_1985,
  title = {Standardization, {{Compatibility}}, and {{Innovation}}},
  author = {Farrell, Joseph and Saloner, Garth},
  year = 1985,
  journal = {The RAND Journal of Economics},
  volume = {16},
  number = {1},
  eprint = {2555589},
  eprinttype = {jstor},
  pages = {70--83},
  publisher = {[RAND Corporation, Wiley]},
  issn = {0741-6261},
  doi = {10.2307/2555589}
}

@article{Farrell.Saloner_1986,
  title = {Installed {{Base}} and {{Compatibility}}: {{Innovation}}, {{Product Preannouncements}}, and {{Predation}}},
  shorttitle = {Installed {{Base}} and {{Compatibility}}},
  author = {Farrell, Joseph and Saloner, Garth},
  year = 1986,
  journal = {The American Economic Review},
  volume = {76},
  number = {5},
  eprint = {1816461},
  eprinttype = {jstor},
  pages = {940--955},
  publisher = {American Economic Association},
  issn = {0002-8282},
  url = {https://www.jstor.org/stable/1816461}
}

@article{GarciaPires_2009,
  title = {R\&{{D}} and Endogenous Asymmetries between Firms},
  author = {Garcia Pires, Armando J.},
  year = 2009,
  month = jun,
  journal = {Economics Letters},
  volume = {103},
  number = {3},
  pages = {153--156},
  issn = {0165-1765},
  doi = {10.1016/j.econlet.2009.03.007}
}

@article{Ghosh.etal_2024,
  title = {Network Externalities, Strategic Delegation and Optimal Trade Policy},
  author = {Ghosh, Anomita and Pal, Rupayan and Song, Ruichao},
  year = 2024,
  month = nov,
  journal = {International Review of Economics \& Finance},
  volume = {96},
  pages = {103655},
  issn = {1059-0560},
  doi = {10.1016/j.iref.2024.103655}
}

@article{Grilo.etal_2001,
  title = {Price Competition When Consumer Behavior Is Characterized by Conformity or Vanity},
  author = {Grilo, Isabel and Shy, Oz and Thisse, Jacques-Fran{\c c}ois},
  year = 2001,
  month = jun,
  journal = {Journal of Public Economics},
  volume = {80},
  number = {3},
  pages = {385--408},
  issn = {0047-2727},
  doi = {10.1016/S0047-2727(00)00115-8}
}

@article{Haaland.Kind_2006,
  title = {Cooperative and {{Non-Cooperative R}}\&{{D Policy}} in an {{Economic Union}}},
  author = {Haaland, Jan and Kind, Hans Jarle},
  year = 2006,
  month = dec,
  journal = {Review of World Economics},
  volume = {142},
  number = {4},
  pages = {720--745},
  issn = {1610-2886},
  doi = {10.1007/s10290-006-0090-8},
  langid = {english}
}

@article{Haaland.Kind_2008,
  title = {R\&{{D}} Policies, Trade and Process Innovation},
  author = {Haaland, Jan I. and Kind, Hans Jarle},
  year = 2008,
  month = jan,
  journal = {Journal of International Economics},
  volume = {74},
  number = {1},
  pages = {170--187},
  issn = {0022-1996},
  doi = {10.1016/j.jinteco.2007.04.001}
}

@article{Hoefele_2016,
  title = {Endogenous Product Differentiation and International {{R}}\&{{D}} Policy},
  author = {Hoefele, Andreas},
  year = 2016,
  month = jan,
  journal = {International Review of Economics \& Finance},
  volume = {41},
  pages = {335--346},
  issn = {1059-0560},
  doi = {10.1016/j.iref.2015.07.005}
}

@article{Ishii_2014,
  title = {Quality--{{Price Competition}} and {{Product R}}\&{{D Investment Policies}} in {{Developing}} and {{Developed Countries}}},
  author = {Ishii, Yasunori},
  year = 2014,
  journal = {Economic Record},
  volume = {90},
  number = {289},
  pages = {197--206},
  issn = {1475-4932},
  doi = {10.1111/1475-4932.12076},
  copyright = {\copyright{} 2013 Economic Society of Australia},
  langid = {english}
}

@article{Juhasz.etal_2024,
  title = {The {{New Economics}} of {{Industrial Policy}}},
  author = {Juh{\'a}sz, R{\'e}ka and Lane, Nathan and Rodrik, Dani},
  year = 2024,
  month = aug,
  journal = {Annual Review of Economics},
  volume = {16},
  number = {Volume 16, 2024},
  pages = {213--242},
  publisher = {Annual Reviews},
  issn = {1941-1383, 1941-1391},
  doi = {10.1146/annurev-economics-081023-024638},
  langid = {english}
}

@article{Katz.Shapiro_1985,
  title = {Network {{Externalities}}, {{Competition}}, and {{Compatibility}}},
  author = {Katz, Michael L. and Shapiro, Carl},
  year = 1985,
  journal = {The American Economic Review},
  volume = {75},
  number = {3},
  eprint = {1814809},
  eprinttype = {jstor},
  pages = {424--440},
  publisher = {American Economic Association},
  issn = {0002-8282},
  url = {https://www.jstor.org/stable/1814809}
}

@article{Katz.Shapiro_1986,
  title = {Technology {{Adoption}} in the {{Presence}} of {{Network Externalities}}},
  author = {Katz, Michael L. and Shapiro, Carl},
  year = 1986,
  journal = {Journal of Political Economy},
  volume = {94},
  number = {4},
  eprint = {1833204},
  eprinttype = {jstor},
  pages = {822--841},
  publisher = {University of Chicago Press},
  issn = {0022-3808},
  url = {https://www.jstor.org/stable/1833204}
}

@article{Kind.etal_2008,
  title = {Efficiency Enhancing Taxation in Two-Sided Markets},
  author = {Kind, Hans Jarle and Koethenbuerger, Marko and Schjelderup, Guttorm},
  year = 2008,
  journal = {Journal of Public Economics},
  volume = {92},
  number = {5-6},
  pages = {1531--1539},
  publisher = {Elsevier},
  url = {https://ideas.repec.org//a/eee/pubeco/v92y2008i5-6p1531-1539.html},
  langid = {english}
}

@article{Kind.etal_2009,
  title = {On Revenue and Welfare Dominance of Ad Valorem Taxes in Two-Sided Markets},
  author = {Kind, Hans Jarle and Koethenbuerger, Marko and Schjelderup, Guttorm},
  year = 2009,
  journal = {Economics Letters},
  volume = {104},
  number = {2},
  pages = {86--88},
  publisher = {Elsevier},
  url = {https://ideas.repec.org//a/eee/ecolet/v104y2009i2p86-88.html},
  langid = {english}
}

@article{Kind.etal_2010,
  title = {Tax Responses in Platform Industries},
  author = {Kind, Hans Jarle and Koethenbuerger, Marko and Schjelderup, Guttorm},
  year = 2010,
  journal = {Oxford Economic Papers},
  volume = {62},
  number = {4},
  eprint = {40856530},
  eprinttype = {jstor},
  pages = {764--783},
  publisher = {Oxford University Press},
  issn = {0030-7653},
  url = {https://www.jstor.org/stable/40856530}
}

@article{Kind.Koethenbuerger_2018,
  title = {Taxation in Digital Media Markets},
  author = {Kind, Hans Jarle and Koethenbuerger, Marko},
  year = 2018,
  journal = {Journal of Public Economic Theory},
  volume = {20},
  number = {1},
  pages = {22--39},
  publisher = {Association for Public Economic Theory},
  url = {https://ideas.repec.org//a/bla/jpbect/v20y2018i1p22-39.html},
  langid = {english}
}

@article{Kind.Schjelderup_2025,
  title = {Taxation and Multi-Sided Platforms: A Review},
  shorttitle = {Taxation and Multi-Sided Platforms},
  author = {Kind, Hans Jarle and Schjelderup, Guttorm},
  year = 2025,
  month = jun,
  journal = {International Tax and Public Finance},
  volume = {32},
  number = {3},
  pages = {895--915},
  issn = {1573-6970},
  doi = {10.1007/s10797-024-09878-1},
  langid = {english}
}

@article{Krueger_1990,
  title = {Government {{Failures}} in {{Development}}},
  author = {Krueger, Anne O.},
  year = 1990,
  month = sep,
  journal = {Journal of Economic Perspectives},
  volume = {4},
  number = {3},
  pages = {9--23},
  issn = {0895-3309},
  doi = {10.1257/jep.4.3.9},
  langid = {english}
}

@article{Leahy.Neary_2001,
  title = {Robust Rules for Industrial Policy Open Economies},
  author = {Leahy, Dermot and Neary, J. Peter},
  year = 2001,
  month = jan,
  journal = {The Journal of International Trade \& Economic Development},
  volume = {10},
  number = {4},
  pages = {393--409},
  publisher = {Routledge},
  issn = {0963-8199},
  doi = {10.1080/09638190110073778}
}

@article{Leibenstein_1950,
  title = {Bandwagon, {{Snob}}, and {{Veblen Effects}} in the {{Theory}} of {{Consumers}}' {{Demand}}},
  author = {Leibenstein, H.},
  year = 1950,
  month = may,
  journal = {The Quarterly Journal of Economics},
  volume = {64},
  number = {2},
  pages = {183--207},
  issn = {0033-5533},
  doi = {10.2307/1882692}
}

@article{Li.etal_2017,
  title = {The {{Market}} for {{Electric Vehicles}}: {{Indirect Network Effects}} and {{Policy Design}}.},
  shorttitle = {The {{Market}} for {{Electric Vehicles}}},
  author = {Li, Shanjun and Tong, Lang and Xing, Jianwei and Zhou, Yiyi},
  year = 2017,
  month = mar,
  journal = {Journal of the Association of Environmental \& Resource Economists},
  volume = {4},
  number = {1},
  pages = {89--133},
  publisher = {University of Chicago Press},
  issn = {2333-5955},
  doi = {10.1086/689702},
  langid = {english}
}

@article{Lin.Saggi_2002,
  title = {Product Differentiation, Process {{R}}\&{{D}}, and the Nature of Market Competition},
  author = {Lin, Ping and Saggi, Kamal},
  year = 2002,
  month = jan,
  journal = {European Economic Review},
  volume = {46},
  number = {1},
  pages = {201--211},
  issn = {0014-2921},
  doi = {10.1016/S0014-2921(00)00090-8}
}

@article{Long.Cai_2023,
  title = {Why Do Governments Subsidize {{R}}\&{{D-Intensive}} Foreign Direct Investment?},
  author = {Long, Yingzi and Cai, Dapeng},
  year = 2023,
  journal = {Economic Modelling},
  volume = {129},
  number = {C},
  publisher = {Elsevier},
  url = {https://ideas.repec.org//a/eee/ecmode/v129y2023ics0264999323003620.html},
  langid = {english}
}

@article{Long.etal_2011,
  title = {Innovation and Trade with Heterogeneous Firms},
  author = {Long, Ngo Van and Raff, Horst and St{\"a}hler, Frank},
  year = 2011,
  journal = {Journal of International Economics},
  volume = {84},
  number = {2},
  pages = {149--159},
  publisher = {Elsevier},
  url = {https://ideas.repec.org//a/eee/inecon/v84y2011i2p149-159.html},
  langid = {english}
}

@article{Millot.Rawdanowicz_2024,
  title = {The Return of Industrial Policies: {{Policy}} Considerations in the Current Context},
  shorttitle = {The Return of Industrial Policies},
  author = {Millot, Valentine and Rawdanowicz, {\L}ukasz},
  year = 2024,
  month = may,
  journal = {OECD Economic Policy Papers},
  publisher = {OECD Publishing},
  doi = {10.1787/051ce36d-en},
  langid = {english}
}

@article{Naskar.Pal_2020,
  title = {Network Externalities and Process {{R}}\&{{D}}: {{A Cournot}}--{{Bertrand}} Comparison},
  shorttitle = {Network Externalities and Process {{R}}\&{{D}}},
  author = {Naskar, Mili and Pal, Rupayan},
  year = 2020,
  month = jan,
  journal = {Mathematical Social Sciences},
  volume = {103},
  pages = {51--58},
  issn = {0165-4896},
  doi = {10.1016/j.mathsocsci.2019.11.006}
}

@article{OECD_2018,
  title = {Tax {{Challenges Arising}} from {{Digitalisation}} -- {{Interim Report}} 2018: {{Inclusive Framework}} on {{BEPS}}},
  shorttitle = {Tax {{Challenges Arising}} from {{Digitalisation}} -- {{Interim Report}} 2018},
  author = {OECD},
  year = 2018,
  month = mar,
  journal = {OECD/G20 Base Erosion and Profit Shifting Project},
  volume = {2018},
  publisher = {OECD Publishing},
  doi = {10.1787/9789264293083-en},
  langid = {english}
}

@article{Pack.Saggi_2006,
  title = {Is {{There}} a {{Case}} for {{Industrial Policy}}? {{A Critical Survey}}},
  shorttitle = {Is {{There}} a {{Case}} for {{Industrial Policy}}?},
  author = {Pack, Howard and Saggi, Kamal},
  year = 2006,
  month = oct,
  journal = {The World Bank Research Observer},
  volume = {21},
  number = {2},
  pages = {267--297},
  issn = {0257-3032},
  doi = {10.1093/wbro/lkl001}
}

@article{Singh.Vives_1984,
  title = {Price and {{Quantity Competition}} in a {{Differentiated Duopoly}}},
  author = {Singh, Nirvikar and Vives, Xavier},
  year = 1984,
  journal = {The RAND Journal of Economics},
  volume = {15},
  number = {4},
  eprint = {2555525},
  eprinttype = {jstor},
  pages = {546--554},
  publisher = {[RAND Corporation, Wiley]},
  issn = {0741-6261},
  doi = {10.2307/2555525}
}

@article{Spencer.Brander_1983,
  title = {International {{R}} \& {{D Rivalry}} and {{Industrial Strategy}}},
  author = {Spencer, Barbara J. and Brander, James A.},
  year = 1983,
  month = oct,
  journal = {The Review of Economic Studies},
  volume = {50},
  number = {4},
  pages = {707--722},
  issn = {0034-6527},
  doi = {10.2307/2297771}
}

@article{Symeonidis_2003,
  title = {Comparing {{Cournot}} and {{Bertrand}} Equilibria in a Differentiated Duopoly with Product {{R}}\&{{D}}},
  author = {Symeonidis, George},
  year = 2003,
  month = jan,
  journal = {International Journal of Industrial Organization},
  volume = {21},
  number = {1},
  pages = {39--55},
  issn = {0167-7187},
  doi = {10.1016/S0167-7187(02)00052-8}
}

@article{Yang_2018,
  title = {Product {{R}}\&{{D}} and {{International Trade}} under {{Bertrand Competition}}},
  author = {Yang, Il-Seok},
  year = 2018,
  month = apr,
  journal = {The Journal of International Trade and Commerce},
  volume = {14},
  number = {2},
  pages = {1--20},
  issn = {1738-8112},
  doi = {10.16980/jitc.14.2.201804.1},
  langid = {english}
}
